\title{Optimizing Space of Parallel Processes}
\author{Manfred Schmidt-Schau{\ss}\thanks{supported by the Deutsche Forschungsgemeinschaft (DFG) under grant SCHM 986/11-1.}
\institute{Goethe-University\\Frankfurt am Main}
\email{schauss@ki.cs.uni-frankfurt.de}
\and
 Nils Dallmeyer\footnotemark[1]
\institute{Goethe-University\\Frankfurt am Main}
\email{dallmeyer@ki.cs.uni-frankfurt.de} }
\newtheorem{theorem}{Theorem}[section]
\newtheorem{proposition}[theorem]{Proposition}
\newtheorem{lemma}[theorem]{Lemma}
\newtheorem{corollary}[theorem]{Corollary}
\newtheorem{example}[theorem]{Example}
\newtheorem{definition}[theorem]{Definition}
\newtheorem{algorithm}[theorem]{Algorithm}
\newcommand{\PT}{\mathit{PT}}
\newcommand{\PAR}{{\ensuremath{\hspace{0.1mm}{\!\scalebox{2}[1]{\tt |}\!}\hspace{0.1mm}}}}
\newcommand{\MVAR}[2]{{#1}\,\textbf{\textsf{m}}\,{#2}}
\newcommand{\EMPTYMVAR}[1]{\MVAR{{#1}}{-}}
\newcommand{\NEW}{\nu}
\newcommand{\THREAD}[2]{{#1}\,{\Leftarrow}\,{#2}}
\newcommand{\tletr}{\text{\normalfont\ttfamily letrec}}
\newcommand{\tof}{\text{{\normalfont\ttfamily of}}}
\newcommand{\tTrue}{\text{{\normalfont\ttfamily True}}}
\newcommand{\tin}{\text{{\normalfont\ttfamily in}}}
\newcommand{\tseq}{\text{{\normalfont\ttfamily seq}}}
\newcommand{\tcase}{\text{{\normalfont\ttfamily case}}}
\newcommand{\talt}{{\tt alt}}
\newcommand{\casepf}{\,\texttt{->}\,}
\newcommand{\treturn}{\text{\normalfont\ttfamily return}}
\newcommand{\tfuture}{\text{{\normalfont\ttfamily future}}}
\newcommand{\MTHREAD}[2]{{#1} \xLongleftarrow{\!\!\text{\normalfont\sffamily main}\!\!}{#2}}
\newcommand{\maycon}{{\downarrow}}
\newcommand{\maydiv}{{\uparrow}}
\newcommand{\CHF}{\mathit{CHF}}
\newcommand{\CHFSTGC}{\ensuremath{\mathit{CHF}^*\mathit{GC}}}
\newcommand{\size}{\mathtt{size}}
\newcommand{\spmin}{\mathit{spmin}}
\newcommand{\itMax}{\mathit{Max}}
\newcommand{\spm}{\mathit{sps}}
\newcommand{\append}{{\texttt{++}}}
\newcommand{\cons}{{\texttt{:}}}
\newcommand{\SPOPTn}{{\sc SpOptN}}
\newcommand{\tdo}{{\tt do}}
\newcommand{\spet}{interleaving}
\newcommand{\spets}{interleavings}
\begin{document}

\maketitle
\begin{abstract}
This paper is a contribution to exploring and analyzing space-improvements in concurrent programming languages,
in particular in the functional process-calculus CHF.
Space-improvements are defined as a generalization of the corresponding notion in deterministic pure functional languages. 
The main part of the paper is the $O(n\cdot\log n)$ algorithm \SPOPTn{} for offline space optimization of several 
parallel independent processes.
Applications of this algorithm are: (i) affirmation of space improving transformations for particular classes of program transformations;
(ii) support of an interpreter-based method for refuting space-improvements; and 
(iii) as a stand-alone offline-optimizer for space (or similar resources) of parallel processes. 
 
\end{abstract}

\vspace*{0.2cm}
\textbf{Keywords.}  space optimization, parallel processes, space improvements, call-by-need evaluation, \\\hphantom{a} \hspace*{2.5cm} concurrency

\section{Introduction} 

The main motivation for investigating the common space consumption of parallel processes is our investigation into
 space optimizations and space improvements in concurrent languages. 
   A special but important subcase are parallel processes (threads) which are independent or have only rare interactions by a controllable form of synchronization.
  An algorithm to  compute a space-minimal execution sequence of a set of given parallel 
   and independent processes would be a first step in space optimizations and of great help for the analysis of space-improvements and -optimizations of programs. 
 
 The space consumption  of threads that are evaluated in parallel is as follows.
  We assume that there is a common memory, 
 where the state of every process is stored. In addition we assume that the storage occupation of processes is independent of each other.
 The model for processes is rather abstract insofar as it only models the thread-local space as a sequence of numbers.  
 Note that even in the case of only two independent threads the naive computation of the minimally necessary (thread-local) space to run the two threads 
 leads to an exponential number of different schedules, which cannot be checked by a brute force search.  
 As we will demonstrate in this paper, a deeper analysis shows that for independent processes (without communication, with the exception of the start and end), 
 this minimum can be computed with an offline-algorithm 
 in  time $O((N+n)\log N)$ where $N$ is the number of processes and $n$ the size of the input (Theorem \ref{theorem:n-processes-polynomial}).
 The prerequisite for the algorithm is that the complete space trace  of every single sub-process is already given, insofar the optimization can be classified as offline. 
 Our abstract model can be applied if all processes have a common start and end time.  
 
 This simplicity of our model invites applications of the space-optimization algorithm also for
 \begin{itemize}
   \item industrial processes (jobs) where the number of machines can be optimized since it is similar to required space
      (resource-restricted scheduling). It can  be used in problem settings similar to  job-shop-scheduling problems \cite{garey-jsp:76}, where the number of machines has 
      to be minimized and 
 where the time is not relevant (see e.g. \cite{garey-johnson-two-mach:77}). 
   \item  (independent) concurrent threads, independent  of a programming language.
 \end{itemize}
 
 Our  model is also extended to synchronization constraints in the form of a Boolean combination of  conditions on simultaneous  and/or relative time points of  two threads.
 The results for the space optimization for synchronization-free processes can be transferred to processes with synchronizations and permits
 polynomial algorithms for a fixed number of synchronization constructs 
(see Theorem \ref{theorem:upper-complexity-with-restrictions})
and therefore allows further analyses of space in more concrete scenarios.
In general, i.e. for arbitrary Boolean constraints, finding the minimum is NP-complete (Theorem \ref{thm:NP-hard}).

 The concrete  programming language model that we investigate is the functional process calculus CHF,  a variant of Concurrent Haskell, 
 which  permits pure and declarative functional modelling in 
combination with sequential 
(monadic) execution of processes with 
synchronization and which employs lazy evaluation  \cite{haskell-web:2016,peyton-gordon-finne:96,sabel-schauss-ppdp11:2011}.
  Related work on space improvements in  deterministic  call-by-need functional languages is \cite{gustavsson-sands:99,gustavsson-sands:01,schmidt-schauss-dallmeyer-wpte:17}.

An application of  results and algorithms for the space-minimization task in special cases is on the one hand to identify program transformation as space improvements (in CHF) and 
on the other hand to accelerate an automated search for potential counterexamples to conjectures of 
 space-improvements.
 Space optimization of parallel processes can sometimes be also applied to  CHF-programs. For example for processes that are deterministically parallel, 
 i.e. there is no sharing between processes, no free variables and the computation terminates.
In these special cases the notion of space improvement is the same as space optimization.

The {\bf structure of the paper} is first to informally explain the functional process calculus $\CHFSTGC$ and a definition of a space improvement in Section \ref{section:chf}.
 A process-model and the \spet{} is defined in Section \ref{sec:schedules}. 
Then the computation of a standard form as a preparation of space optimization is given in Section \ref{sec:stdform}. 
The optimization algorithm {\SPOPTn} is defined in Section \ref{sec:optn}, where also the correctness and complexity are determined
in Theorem  \ref{theorem:n-processes-polynomial}. Extensions for synchronization constructs are in Section \ref{sec:synchro}.
Section \ref{sec:applications} illustrates a relation to other scheduling methods and reports on an implementation and use 
of the algorithm. The paper concludes with Section \ref{sec:conclusion}.

\section{The Process Calculus $\CHF$ and Space Improvements}\label{section:chf}
In this section we present sufficient information to understand the role of the space optimization method of parallel processes 
in the next sections for CHF as our example programming language. Therefore 
 we first give an informal presentation of the concurrent program calculus $\CHF$ that combines distributed processes, synchronization, and shared memory 
 with a purely functional expression language. 
 We will also informally explain how our space optimization can contribute to the space behavior of program transformations 
 (so-called space improvements)  in the calculus $\CHF$. 

\subsection{The Process Calculus $\CHF$}\label{subsec:chf}

$\CHF$   models a core language of Concurrent Haskell extended by futures, 
where the exact syntax, contexts, structural congruence rules, and  reductions rules can be found for example 
in \cite{sabel-schauss-ppdp11:2011,schmidt-schauss-sabel:frank-44:11}.

\noindent A CHF-program as well as the program state after some reductions can always be represented by 
$$\MTHREAD{x}{e} \PAR \THREAD{x_1}{e_1}\PAR \ldots\PAR \THREAD{x_n}{e_n} \PAR \MVAR{y_1}{e_1'}\PAR \ldots\PAR \MVAR{y_m}{e_m'}
   \PAR z_1 = e_1'' \PAR \ldots  \PAR z_k = e_k''$$

A thread $\THREAD{x_i}{e_i}$ is a sequentially executed process, where $e_i$ is the thread-program,which finally binds its return value to $x_i$.
 The main thread $\MTHREAD{x}{\ldots}$ is a thread, with the special task to signal whether the whole computation is finished.  
 $\MVAR{y_i}{e_i'}$ is a storage device that behaves as a one-place  buffer, and $z_i = e_i''$ are shared memory cells containing the expression $e_i''$.
The expressions $e_i$, $e_i'$ and $e_i''$ are CHF-expressions, i.e. they are  monadic expressions 
(sequential and side-effecting) which may contain 
 pure expressions as in Haskell as subexpressions. The difference between $\THREAD{x_i}{e_i}$ and $x_i = e_i$ is that $\THREAD{x_i}{e_i}$ will execute,
 whereas $x_i = e_i$ is like a pointer for sharing the expression $e_i$.
 
 The execution is defined through a standard reduction sequence on the syntactic description of the program (the state), which is a non-deterministic 
 small-step reduction, where the non-determinism comes only from the competing processes.
Every thread $\THREAD{x_i}{e_i}$ can be seen as a process that performs (controlled by the standard reduction) the computation defined by expression $e_i$. 
The parallel combination of the threads performs a distributed evaluation, where also new threads may be started.

As an example of a CHF-program consider the following definition of a program, where we use the do-notation as in Haskell with the same
meaning in CHF. 
$$
   {\begin{array}{lll} \MTHREAD{x}{} & \tdo & z_1  \leftarrow (\tfuture~ e_1) \\
                                 &     & z_2  \leftarrow (\tfuture~ e_2) \\
                                  &    & \tseq~(z_1+z_2)~(\treturn (z_1,z_2))
   \end{array}
              }  
$$ 

After two reductions of the main thread, the 
  state is
$$
   {\begin{array}{lll} \MTHREAD{x}{} & \tdo &   (\tseq~(z_1+z_2)~(\treturn (z_1,z_2)))\\
                        \PAR 
          \THREAD{z_1}{e_1} \\
            \PAR 
          \THREAD{z_2}{e_2} 
   \end{array}
              }  
$$ 
which consists of three  threads. The main thread now has to wait for the delivery of the values for $z_1, z_2$, which will be the result
after the threads for $z_1, z_2$ terminate their computation and return something.  

If the expressions $e_1, e_2$ use common variables, for example if $e_2$ demands the value of $z_1$, then  the processes 
are not  independent, and the sequence of executions is restricted. There may even be deadlocks, if $e_1$ requires $z_2$ as a value, and 
$e_2$ requires $z_1$ as a value. 

In the case that the expressions $e_1, e_2$ do not use common variables (even not indirectly), the processes can be evaluated independently, which means that
every interleaving of the executions of $e_1, e_2$ is possible. This independent case will be considered more deeply in later sections,
since it permits nice space optimizations, and an example for easy detection of space improvements.

\begin{figure*}[tp]
\begin{center}
 $\begin{array}{l@{~}c@{~}lp{6cm}}
  \size(x) &= & 0\\
  \size(e_1~e_2) &= & 1+ \size(e_1) + \size(e_2)\\
  \size(\lambda x.e) & = & 1+ \size(e)\\
  \size(\tcase~e~\tof~\talt_1 \ldots \talt_n) & = & 1+ \size(e)   ~~+ \sum_{i=1}^n \size(\talt_i)\\
  \size((c~x_1 \ldots x_n)~\casepf~ e) & = & 1 + \size(e) \\
    \size(f~e_1 \ldots e_n) & = & 1 + \sum \size(e_i) & \hspace*{-1.5cm}\mbox{for constructors and operators $f$}\\
                  &&   & \hspace*{-1cm}\mbox{such as \tfuture,\treturn, \ldots  } \\
  \size(\tletr~x_1 = e_1, \ldots,  x_n = e_n~\tin~s) & = & \size(e)  + \sum \size(e_i)\\
  \size(P_1 \PAR P_2) & = & \size(P_1)  + \size(P_2)\\
  \size(x ~\mathit{op}~ {e}) & = & 1 +  \size(e) & \hspace*{-1.5cm}\mbox{for $\mathit{op}\in\{\mathrm{=,\Leftarrow,\mathbf{m}}\}$}\\
   \size(\EMPTYMVAR{x} ) & = & 1 \\
    \size(\NEW{x}.{P} ) & = &  1+\size(P)  
  \end{array}$
\end{center}

  \caption{Definition of $\size{}$ of expressions}\label{fig:space-size}
\end{figure*}

In addition to the program executions, it is crucial
to recognize  (binding-)garbage and remove it, since we are interested in space improving transformations. 
 It is shown in \cite{schmidt-schauss-sabel-dallmeyer:frank-58:17,schmidt-schauss-sabel-dallmeyer-ppdp-18} that garbage collection and the 
 modification of the standard reduction (i.e. program execution) leaves all interesting properties
(equivalence of expressions, correctness of transformations) invariant, and thus this is a correct and space-optimizing transformation.

\subsection{Space Measure, Equivalence of Programs and Space Improvements} 
An example for a  space measure is the  generalization of the space measure of \cite{schmidt-schauss-dallmeyer-wpte:17,schmidt-schauss-sabel-dallmeyer:frank-58:17}, 
which does not count variables (see Fig. \ref{fig:space-size}). The reason for the specifics is that this size measure is compatible with 
the variants of abstract machines for CHF as explained in \cite{schmidt-schauss-dallmeyer-wpte:17,schmidt-schauss-sabel-dallmeyer:frank-58:17}. 

\begin{definition}\label{def:of-spmin}
The space measure $\spm(\mathit{Red})$ of a successful standard reduction $\mathit{Red}$ of a program  $P$ is the maximum of all sizes $\size(P_i)$ during
the whole standard reduction  sequence, $\mathit{Red} = P \xrightarrow{sr}  P_1 \xrightarrow{sr} \ldots \xrightarrow{sr} P_n$,
   where we assume that the $P_i$ are always garbage-reduced. 

The space measure of a CHF-program  $P$ is defined as $\spmin(P) = min\{\spm(\mathit{Red}) \mid \mathit{Red}$ 
~ is a successful standard reduction of $P\}$.
\end{definition}
 
As a concrete example, the size of the program  $(\MTHREAD{x}{\treturn~y} \PAR  \MVAR{x}{1}  \PAR y = {\tt Cons}~x~{\tt Nil})$ is $2+2+2 = 6$. 

The reason for not counting the sizes directly before a garbage collection is that the calculus and abstract machines may create bindings
 that may be garbage 
and would thus be immediately garbage collected after the reduction step.
Taking this garbage into account  would distort  the reasoning about measurement  in particular  if these bindings have a large size 
    (more information about this can be found in \cite{schmidt-schauss-dallmeyer-wpte:17}).
    This principle  of measuring space in a small-step calculus is also used in \cite{gustavsson-sands:01}. 
    
    In the following $P\maycon$ means that $P$ has a successful standard reduction and $P\maydiv$ is its negation; $P_1\sim_c P_2$ means that $P_1, P_2$ 
    are contextually equivalent in $\CHF$. 

\begin{definition}\label{def:space-improvement}
  A program transformation $\xrightarrow{\PT}$ is a {\em space-improvement} if for all contextual equivalent processes $P_1,P_2$: $P_2 \xrightarrow{\PT} P_1$ 
  implies that $P_1$ space-improves $P_2$, i.e. $\spmin(P_1) \leq \spmin(P_2)$.
\end{definition}

In this paper we focus on a special situation, where the program $P$ consists of several threads that, after they are started, 
run completely independent, without using common data structures, and then communicate and halt.  In order to test or prove $P \longrightarrow P'$ to be an space-improvement, 
it is crucial to determine
the optimal space usage of $P$ and compare it with the optimal space usage of $P'$. 
The computation of the optimal space usage of $P$ requires (among others) to find the space-optimal interleaving of the phase between 
starting the $n$ threads until all threads finally stop and communicate. 

For example, in the program $\MTHREAD{u}{\ldots} \PAR \ldots \PAR \THREAD{x}{e_1} \PAR \THREAD{y}{e_2}$, we consider $\THREAD{x}{e_1}$ and $\THREAD{x}{e_2}$ 
as the two subprocesses $p_1,p_2$, which can be measured separately.

For the case of independent processes, we present an algorithm for computing an optimal interleaving and the space minimum in the case of parallel evaluation possibilities (if the executions are already given),
where the algorithm runs in $O(n\log n)$ time.
We also analyze the impact on the runtime in the case of dependencies between processes, where synchronization points between processes are defined explicitly.

\section{Abstract Model of Independent Processes and Space}\label{sec:schedules}

The assumptions underlying the abstraction is that CHF-processes use a common memory 
for their local data structures, but they cannot see each others memory entries.
The CHF-processes may independently start or stop or  pause  at certain time points. We also assume
 that synchronization and communication 
  may occur at certain time points as interaction between CHF-processes. \\ 
Every CHF-process is abstractly modeled  by its trace of space usage, given as a list of integers.
In addition we later add constraints expressing simultaneous occurrences of time points of different CHF-processes 
 as well as start-points and end-points of CHF-processes. 

In the following  we use the notation $[a_1,\ldots,a_n]$ for a list of the elements $a_1,\ldots,a_n$. We also use 
 $(a \cons l)$ for adding a first element $a$ to list $l$, $l_1 \append l_2$ for appending the lists $l_1$ and $l_2$, 
 $\mathit{tail} (l)$ for the tail of the list $l$,  
and $[f(x) \mid x \in L]$ for a list $L$ denotes the list of $f(x)$ in the same sequence as that of $L$ (i.e. it is a list comprehension).

In the following we abstract CHF-processes by a list of non-negative integers. 
For simplicity we call this list a {\em process} in the rest of the paper.
A (parallel) interleaving is constructed such that from one state to the next one, each process proceeds by at most one step 
and at least one process  proceeds.

\begin{definition}\label{def:integer-process}
A {\em process} is a nonempty, finite list of non-negative integers.
For $n > 0$ let $P_1,\dots,P_n$ be  processes where $m_i$ is the length of $p_i$, and let $p_{i,j}$ for $j = 1,\ldots, m_i$ be the elements. 
Then   an {\em interleaving} of $P_1,\ldots,P_n$ is a list $[q_1,\ldots,q_h]$ of
 $n$-tuples $q_j$ constructed using the following (non-deterministic) algorithm:
 \begin{enumerate}
   \item Initially, let $q$ be the empty list.
   \item\label{item-restart} If all processes $P_1,\dots,P_n$ are empty, then  return $q$.
   \item   Set $q := q \append  [(p_{1,1}, \ldots, p_{n,1})]$, i.e., the tuple of all first elements is added at the end of $q$.\\
    Let $(b_1,\ldots,b_n)$ be a (nondeterministically chosen) tuple of Booleans, such that
      there is at least one $k$ such that $b_k$ is {\tTrue} and $P_k$ not empty.  \\
    For all $i = 1,\ldots,n$: set $P_i = \mathit{tail}(P_i)$ if $b_i$ and $p_i$ is not empty; otherwise do not change $P_i$. \\
    Continue with item  \ref{item-restart}.
\end{enumerate}
\end{definition}

\begin{definition} Let $P_1,\ldots,P_n$ be processes. 
The {\em space usage} $\spm(S)$ of an interleaving $S$ of $P_1,\ldots,P_n$ is the maximum of the sums of the elements in the tuples in $S$, i.e. 
$\spm(S) = \max\{\sum_{i = 1}^n a_i \mid (a_1,\ldots,a_n) \in S \}$.
The {\em required space} $\spmin(P_1,\ldots,P_n)$ for $n$ processes $P_1,\ldots,P_n$  is the minimum of 
the space usages of all interleavings of $P_1,\ldots,P_n$, i.e. $\min\{\spm(S) \mid S \mbox{ is an interleaving of } P_1,\ldots,P_n\}$.\\
A {\em peak} of $P_i$ is a maximal element of $P_i$, and a {\em valley} is a smallest element in $P_i$.
A {\em local peak} of $P_i$ is an maximal element in $P_i$ which is not smaller than its neighbors.
A {\em local valley} of $P_i$ is a minimal element in $P_i$ which is not greater than its neighbors.
\end{definition}

\begin{example}
 For two processes $[1,7,3], [2,10,4]$ the $\spmin$-value is $11$, by first running the second one and then running the first.
 I.e. such a (space-optimal) interleaving is $[(1,2),$ $(1,10),$ $(1,4),$ $(7,4),$ $(3,4)]$.
The interleaving that results from an ``eager'' scheduling is $[(1,2), (7,10), (3,4)]$, with $\spm$-value 17, and hence is not space-optimal.
\end{example}

\section{Standard Form of Processes}\label{sec:stdform}
We will argue that  an iterated reduction of single processes by the following 5  patterns  permits to compute $\spmin$ from smaller processes.
This is a first step like a standardization of processes for the purpose of $\spmin$-computation, and is a preparing step for the 
optimization algorithm {\SPOPTn} in Definition \ref{def:algo-n-processes}.

\begin{definition}  The trivial pattern $M_0$ is $a_i = a_{i+1}$.
There are two further, nontrivial patterns: The first pattern $M_1$ is  $a_i \leq a_{i+1} \leq a_{i+2}$ and the second pattern  $M_2$ is
 $a_i \geq a_{i+1} \geq a_{i+2}$. \\
 A pattern {\em matches a process}  $[a_1,\ldots,a_k]$  at index $i$, if for index $i$ the conditions are satisfied.\\ 
 A single pattern application is as follows: If the patterns $M_0, M_1$ or $M_2$ matches a process for some index $i$, then  $a_{i+1}$ is removed.
 \end{definition}

\begin{proposition}
 Let $P_1,\dots, P_n$ be $n$ processes and let $P_1', \dots, P_n'$ be the processes after removal of subsequent equal entries, i.e. using $M_0$.
  Then $\spmin(P_1,\dots,P_n) =  \spmin(P_1',\dots,P_n')$.  
\end{proposition}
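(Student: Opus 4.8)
The plan is to show that applying pattern $M_0$ (removing a repeated consecutive entry) changes neither the set of achievable space usages in an essential way nor, in particular, the minimum. The natural strategy is to establish a two-way correspondence between interleavings of the original processes $P_1,\dots,P_n$ and interleavings of the reduced processes $P_1',\dots,P_n'$, and to argue that this correspondence preserves the $\spm$-value (or at least never increases the minimum in either direction). Since the removal of subsequent equal entries can be decomposed into single applications of $M_0$, it suffices by induction to treat the case where exactly one process, say $P_1 = [a_1,\dots,a_k]$, has $a_i = a_{i+1}$ for a single index $i$, and $P_1' = [a_1,\dots,a_i,a_{i+2},\dots,a_k]$ with all other processes unchanged; the general statement then follows by iterating.

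First I would prove $\spmin(P_1',\dots,P_n') \leq \spmin(P_1,\dots,P_n)$. Take a space-optimal interleaving $S$ of the original processes. In $S$, the duplicated entry $a_{i+1}$ of $P_1$ appears in some tuples; because $a_i = a_{i+1}$, a step in which $P_1$ advances from $a_i$ to $a_{i+1}$ leaves the contribution of $P_1$ to the tuple sum unchanged. The idea is to delete the stretch of $S$ corresponding to this redundant advance: wherever $P_1$'s displayed value is the copy $a_{i+1}$, replace it by treating that as the single surviving entry $a_i$ in $P_1'$, and suppress the now-empty advancing step of $P_1$ (letting the other processes' chosen Boolean moves for that step be merged appropriately). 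Each resulting tuple has the same sum as some tuple already present in $S$, so the maximum cannot increase, giving an interleaving of $P_1',\dots,P_n'$ with $\spm$-value at most $\spm(S)$.

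Conversely, for $\spmin(P_1,\dots,P_n) \leq \spmin(P_1',\dots,P_n')$, I would start from an optimal interleaving $S'$ of the reduced processes and insert the duplicate back in: at the (first) moment $P_1$ is about to leave the value $a_i$, I duplicate that step, so that $P_1$ now dwells one extra tuple at value $a_i = a_{i+1}$ while holding all other processes fixed during the inserted step. The inserted tuple has the same sum as its neighbor, so again the maximum is unchanged, yielding an interleaving of the original processes with the same $\spm$-value. Combining the two inequalities gives equality for a single $M_0$ step, and induction on the number of removals closes the proof.

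The main obstacle I anticipate is bookkeeping the Boolean advance-vectors $(b_1,\dots,b_n)$ carefully so that the constructed lists are genuinely valid interleavings in the sense of Definition~\ref{def:integer-process} — in particular that at every step at least one non-empty process advances and that no process is advanced past its end. When suppressing or inserting a step for $P_1$, I must ensure the step I delete is not the unique advancing step (otherwise the "at least one $b_k$ true" condition fails), and when several processes move simultaneously I must re-split or merge their moves consistently. This is routine but requires a precise case analysis rather than a single clean bijection; the key conceptual point that makes it all work — that $a_i = a_{i+1}$ forces equal tuple sums at the boundary — is what guarantees $\spm$ is preserved throughout.
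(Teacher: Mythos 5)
Your proposal is correct and follows essentially the same route as the paper, which simply asserts the result is ``obvious by rearranging the schedules'' into interleavings of equal $\spm$-value; you spell out exactly that rearrangement (delete or insert the redundant step at the duplicated entry, using $a_i=a_{i+1}$ to keep tuple sums unchanged) together with the bookkeeping for the advance-vectors that the paper leaves implicit. No gap.
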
 
\begin{proof}
This is obvious by rearranging the schedules, leading to different \spets{}, which have the same $\spmin$-value.
\end{proof}

\begin{proposition}
Let $P_1,\dots,P_n$ be $n$ processes. Let $P_1',\dots,P_n'$ be the processes after several application of the pattern-reduction process using $M_1$ 
and $M_2$. 
Then $\spmin(P_1,\dots,P_n) = \spmin(P_1',\dots,P_n')$.
\end{proposition}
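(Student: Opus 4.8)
The plan is to reduce the claim to a single pattern application and then argue by induction on the number of applications: it suffices to show that if $P_j'$ arises from one process $P_j$ by deleting the middle entry $a_{i+1}$ of a monotone triple matched by $M_1$ (i.e.\ $a_i \le a_{i+1} \le a_{i+2}$) or by $M_2$ (i.e.\ $a_i \ge a_{i+1} \ge a_{i+2}$), then replacing $P_j$ by $P_j'$ among the processes preserves $\spmin$. A routine check shows that an \spet{} read backwards is an \spet{} of the reversed processes with the same $\spm$-value, so $\spmin$ is invariant under reversing all processes; this turns $M_2$ into $M_1$, and I therefore treat only the increasing case $a_i \le a_{i+1} \le a_{i+2}$, establishing the two inequalities $\spmin(\ldots,P_j,\ldots) \le \spmin(\ldots,P_j',\ldots)$ and $\ge$ separately.

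For $\spmin(\ldots,P_j,\ldots) \le \spmin(\ldots,P_j',\ldots)$ I start from an optimal \spet{} $S'$ of the reduced processes and construct an \spet{} $S$ of the original ones with $\spm(S) \le \spm(S')$. In $S'$ the head of $P_j'$ passes directly from $a_i$ to $a_{i+2}$ at some step $t$. To build $S$ I replay $S'$ but advance at $t$ only the processes other than $j$, reaching the configuration in which the other processes already sit at their post-$t$ heads while $P_j$ still shows $a_i$; then I let $P_j$ alone advance twice, recording the heads $a_i$ and $a_{i+1}$ before it reaches $a_{i+2}$. Because $a_i \le a_{i+1} \le a_{i+2}$, both inserted tuples have sum at most that of the post-$t$ tuple of $S'$ (in which $P_j'$ shows $a_{i+2}$ with the same other heads), so no recorded sum exceeds $\spm(S')$; from there the two configurations coincide and the remainder of $S'$ is copied verbatim. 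Each inserted step advances $P_j$, so validity is automatic, and the auxiliary ``advance the others'' step is simply omitted when it would move nothing.

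For the reverse inequality I start from an optimal \spet{} $S$ of the original processes and eliminate the occurrences of $a_{i+1}$. The steps at which the head of $P_j$ equals $a_{i+1}$ form one contiguous block, during which $P_j$ itself does not advance (except to leave the block) while the remaining processes advance through various configurations. In $S'$ I keep $P_j'$ stalled at $a_i$ throughout this block, copy verbatim all advances of the other processes, and let $P_j'$ make its single advance $a_i \to a_{i+2}$ only at the end of the block. Since $a_i \le a_{i+1}$, every tuple recorded inside the block now has sum no larger than the corresponding tuple of $S$, and outside the block nothing changes, giving $\spm(S') \le \spm(S)$. The only point needing care is the step at which $P_j$ originally advanced from $a_i$ to $a_{i+1}$ with no other process moving; that step is dropped, which preserves the requirement that at least one process advance at every step. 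The decreasing patterns are handled symmetrically, stalling at $a_{i+2}$ and aligning the inserted entry with the smaller neighbour, or are reduced to the above by reversal.

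I expect the main obstacle to be the bookkeeping in the reverse inequality: one must simultaneously remove the middle value from $P_j$ and faithfully reproduce all interleaved advances of the remaining processes, choosing the stalling position ($a_i$ in the increasing case, $a_{i+2}$ in the decreasing case) so that monotonicity forces every affected sum in the favourable direction, while dealing with the degenerate steps in which only $P_j$ moved without ever violating the constraint that each step of an \spet{} advances at least one non-empty process.
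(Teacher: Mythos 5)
Your proof is correct and follows essentially the same route as the paper: for one inequality you stall the reduced process at $a_i$ throughout the block where the original showed $a_{i+1}$ (sums can only drop since $a_i\le a_{i+1}$), and for the other you insert the $a_{i+1}$ step aligned with the configuration in which the larger neighbour $a_{i+2}$ appears (so no sum exceeds an existing one). Your extra care about degenerate steps (a move that would advance nothing, reversal to dispose of $M_2$) only makes explicit what the paper leaves implicit.
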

\begin{proof}
It is sufficient to assume that exactly one change due to a pattern match is performed. It is also sufficient to assume that the pattern is $M_1$ and that it applies in $P_1$.
We can also look only at a subpart of an \spet{} to have easier to grasp indices.
For argumentation purposes, we choose the correspondence between the \spets{} $(P_1,P_2,\ldots,P_n)$ and $(P_1',P_2,\ldots,P_n)$ as follows.  \\
Let $[p_{1,1},p_{1,2},p_{1,3}]$ with $p_{1,1} \leq p_{1,2} \leq p_{1,3}$ 
be the subprocess of $P_1$ that is replaced by $[p_{1,1},p_{1,3}]$. Consider the part
 $(p_{1,1},\dots,p_{n,1})\cons [(p_{1,2},p_{2,2},\dots,p_{n,2}) \mid (p_{2,2},\dots,p_{n,2}) \in B] \append [(p_{1,3},\dots,p_{n,3})]$
of the \spet{}, where $B$ is a sequence of $n-1$-tuples.
Then the modified {\spet} for $(P_1',P_2,\dots,P_n)$ can be defined as:
 ~~  $(p_{1,1},\dots,p_{n,1})\cons [(p_{1,1},p_{2,2},\dots,p_{n,2}) \mid (p_{2,2},\dots,p_{n,2}) \in B] \append [(p_{1,3},\dots,p_{n,3})]$\\
  and since for every \spet{} of $(P_1,P_2,\ldots,P_n)$ we obtain an \spet{} of  $(P_1',P_2,\ldots,P_n)$ with a $\spm$ that is smaller or equal, and since $\spmin$ is defined
  as a minimum, we obtain
  $\spmin(P_1,P_2,\ldots,P_n) \geq  \spmin(P_1',P_2,\ldots,P_n)$.   \\
For the other direction, consider the part $[(p_{1,1},\dots,p_{n,1}),(p_{1,3},p_{2,2},\dots,p_{n,2})]$ of an \spet{} of the processes $P_1',P_2,\dots,P_n$. 
Then $\spmin$ of the part $[(p_{1,1},\dots,p_{n,1}), (p_{1,2},p_{2,2},\dots,p_{n,2}), (p_{1,3},p_{2,2},\dots,p_{n,2})]$ of the \spet{} of $P_1,\dots,P_n$ is the same as
 before, thus $\spmin(P_1,\dots,P_n) \leq \spmin(P_1',P_2,\dots,P_n)$. \\
The two inequations imply $\spmin(P_1,\dots,P_n) = \spmin(P_1',P_2,\dots,P_n)$. 
\end{proof}

\begin{definition} If in a process $P$ 
   every strict increase is followed by a strict decrease and every strict decrease is followed by a strict increase,
then the process $P$ is called a {\em zig-zag process}.
\end{definition}

By exhaustive application we can assume that the pattern $M_0$, $M_1$ and $M_2$  above are not applicable to processes which means that the processes can be assumed to be zig-zag.

Now we show that there are more complex patterns that can also be used to reduce the processes before computing $\spmin$. 
The following patterns $M_3, M_4$ are like stepping downstairs and upstairs, respectively.

\begin{definition}
  The patterns $M_3, M_4$ are defined as follows: 
  \begin{itemize}
\item  $M_3$ consists of $a_i,a_{i+1}, a_{i+2},a_{i+3}$, with $a_i > a_{i+1}$, $a_{i+1} < a_{i+2}$, $a_{i+2} > a_{i+3}$ and $a_i \geq a_{i+2},a_{i+1} \geq a_{i+3}$. 
\item  $M_4$ consists of $a_i,a_{i+1}, a_{i+2},a_{i+3}$, with $a_i < a_{i+1}$, $a_{i+1} > a_{i+2}$, $a_{i+2} < a_{i+3}$ and $a_i \leq a_{i+2},a_{i+1} \leq a_{i+3}$. 
\end{itemize}
 
$\begin{array}{@{}lc@{\hspace*{1.5cm}}lc@{}}
M_3:
&
\begin{minipage}{4cm}
$\xymatrix@C=5mm@R=2mm{
  a_i\ar@{-}[ddr]   \\
   & & a_{i+2} \ar@{-}[dl] \ar@{-}[ddr] \\
    & a_{i+1}  \\
    & & & a_{i+3} \\
}$
\end{minipage}
 &
 M_4: &
 \begin{minipage}{4cm}
$\xymatrix@C=5mm@R=2mm{
   & & & a_{i+3} \ar@{-}[ddl] \\
   & a_{i+1}  \ar@{-}[ddl] \ar@{-}[dr]  &   \\
   && a_{i+2}   \\
   a_i \\
}$
\end{minipage}
 \end{array}
$

 If for some $i$:  $M_3$ or $M_4$ matches, then eliminate  $a_{i+1}, a_{i+2}$.
\end{definition}

We show that the complex patterns can be used to restrict the search for an optimum to special processes:
\begin{lemma}
 Let $P_1,\ldots, P_n$ be processes.
 If one of the patterns $M_3,M_4$ matches one of the processes, then 
 it is sufficient to check the shortened $P_1',\ldots, P_n'$ for the space-minimum.
\end{lemma}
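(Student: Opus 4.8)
The plan is to mirror the proof given for the patterns $M_1,M_2$: it suffices to treat a single application of $M_3$ in $P_1$, the case $M_4$ being completely symmetric (obtained by reversing all the inequalities). Writing the matched entries as $a_i > a_{i+1}$, $a_{i+1} < a_{i+2}$, $a_{i+2} > a_{i+3}$ with $a_i \geq a_{i+2}$ and $a_{i+1} \geq a_{i+3}$, the reduced process $P_1'$ carries $[a_i,a_{i+3}]$ in place of $[a_i,a_{i+1},a_{i+2},a_{i+3}]$, while $P_2,\dots,P_n$ stay fixed. I would prove $\spmin(P_1,\dots,P_n) = \spmin(P_1',\dots,P_n)$ by two inequalities, each transforming a given \spet{} of one side into an \spet{} of the other whose $\spm$-value is not larger, exactly as in the $M_1$-argument but now with \emph{two} blocks: in any \spet{} the first component dwells successively at $a_{i+1}$ (while the other processes range over a block $B_1$) and then at $a_{i+2}$ (while they range over a block $B_2$) before advancing to $a_{i+3}$.

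For the direction $\spmin(P_1,\dots,P_n) \geq \spmin(P_1',\dots,P_n)$ I would take the relevant part
$$(a_i,\dots) \cons [(a_{i+1},\dots) \mid \cdots \in B_1] \append [(a_{i+2},\dots) \mid \cdots \in B_2] \append [(a_{i+3},\dots)]$$
of an \spet{} of $P_1,\dots,P_n$ and hold $P_1'$ at the smaller retained value $a_{i+3}$ throughout both blocks, letting the other processes perform exactly the advances they made before (deleting any tuple that thereby becomes a duplicate of its predecessor). The staircase conditions $a_{i+1}\geq a_{i+3}$ and $a_{i+2} > a_{i+3}$ guarantee that the first component only ever decreases at the affected tuples, so no sum increases, while the initial tuple with value $a_i$ is unchanged.

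For the converse I would start from the two tuples $(a_i,\vec{o}) \cons [(a_{i+3},\vec{o}')]$ of an \spet{} of $P_1',\dots,P_n$ at which $P_1'$ advances, and reinsert the eliminated entries, holding the other processes at their \emph{pre}-transition state $\vec{o}$:
$$(a_i,\vec{o}) \cons (a_{i+1},\vec{o}) \cons (a_{i+2},\vec{o}) \cons [(a_{i+3},\vec{o}')].$$
Because $a_i > a_{i+1}$ and $a_i \geq a_{i+2}$, the two inserted tuples have sum at most that of $(a_i,\vec{o})$, so the $\spm$-value of this part is unchanged, which gives $\spmin(P_1,\dots,P_n) \leq \spmin(P_1',\dots,P_n)$. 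The two inequalities together yield the claimed equality, and hence it suffices to check the shortened processes.

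The step I expect to be most delicate is the bookkeeping that keeps the constructed \spets{} valid, i.e.\ that every step still advances at least one process. In the $\geq$-direction this forces the deletion of tuples that collapse once $P_1'$ is frozen at $a_{i+3}$ (in particular when a block is empty or the other processes idle across a former $P_1$-step), whereas in the $\leq$-direction the reinsertion is automatically safe because $P_1$ itself advances at each of the three new steps. The real role of the two ``non-crossing'' inequalities $a_i \geq a_{i+2}$ and $a_{i+1} \geq a_{i+3}$ (and dually $a_i \leq a_{i+2}$, $a_{i+1}\leq a_{i+3}$ for $M_4$, where one instead freezes at $a_i$ and inserts holding the \emph{post}-transition state) is precisely to make the single frozen value dominate from below at every intermediate tuple, which is what makes the whole exchange monotone in $\spm$.
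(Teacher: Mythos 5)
Your proposal is correct and follows essentially the same route as the paper's proof: the paper likewise freezes the first component at $p_{1,4}$ (your $a_{i+3}$) across the two blocks $B_2,B_3$ for the ``$\geq$'' direction, and for ``$\leq$'' reinserts $(p_{1,2},p_{2,1},\dots,p_{n,1})$ and $(p_{1,3},p_{2,1},\dots,p_{n,1})$ while holding the other processes at their pre-transition state. Your extra remark about deleting tuples that collapse into duplicates (to keep the construction a valid interleaving) is a sound refinement of a detail the paper leaves implicit.
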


\begin{proof}
It is sufficient to assume that exactly one change due to a pattern match is performed. It is sufficient to assume that the pattern is $M_3$ and 
that it applies in $P_1$.
We can also look only at a subpart of an \spet{} to have easier to grasp indices.
For argumentation purposes, we choose the correspondence between the \spets{} $(P_1,P_2,\ldots,P_n)$ and $(P_1',P_2,\ldots,P_n)$ as follows.\\
Let $[p_{1,1},p_{1,2},p_{1,3},p_{1,4}]$ with $p_{1,1} > p_{1,2}$, $p_{1,2} < p_{1,3}$, $p_{1,3} > p_{1,4}$, $p_{1,1} \geq p_{1,3}$ and $p_{1,2}\geq p_{1,4}$
be the subprocess of $P_1$ that is replaced by $[p_{1,1},p_{1,4}]$. Consider the following part of the \spet{}, where $B_2,B_3$ are sequences of $n-1$-tuples:\\
$\begin{array}[t]{l}
(p_{1,1},\dots,p_{n,1})\cons [(p_{1,2},p_{2,2},\dots,p_{n,2}) \mid (p_{2,2}\dots,p_{n,2}) \in B_2] \append [(p_{1,3},p_{2,3},\dots,p_{n,3}) \mid (p_{2,3}\dots,p_{n,3}) \in B_3]\\[1mm]
\quad \append [(p_{1,4},\dots,p_{n,4})]
\end{array}$
\\[1mm]
The modified interleaving for $(P_1',P_2,\dots,P_n)$ can be defined as
  $[(p_{1,1},\dots,p_{n,1})] \append [(p_{1,4},q_{2,4},\dots,q_{n,4}) \mid (q_{2,4},\dots,q_{n,4}) \in B_2\append B_3 \append [p_{2,4},\dots,p_{n,4}]]$
and since for every \spet{} of $(P_1,P_2,\ldots,P_n)$ we obtain an \spet{} of  $(P_1',P_2,\ldots,P_n)$ with a $\spm$ that 
is smaller or equal, and since $\spmin$ is defined as a minimum, we obtain $\spmin(P_1,P_2,\ldots,P_n) \geq  \spmin(P_1',P_2,\ldots,P_n)$.   \\
Now consider the part $[(p_{1,1},\dots,p_{n,1}),(p_{1,4},\dots,p_{n,4})]$ of an \spet{} of $P_1',P_2,\ldots,P_n$. 
Then $\spmin(.)$ of the part 
$[(p_{1,1},p_{2,1},\dots,p_{n,1}),(p_{1,2},p_{2,1}\dots,p_{n,1}),(p_{1,3},p_{2,1},\dots,p_{n,1}),(p_{1,4},\dots,p_{n,4})]$
 of the \spet{} of $P_1,\dots,P_n$ is the same as
 before, thus $\spmin(P_1,\dots,P_n) \leq \spmin(P_1',P_2,\dots,P_n)$. 

The two inequations imply $\spmin(P_1,\dots,P_n) = \spmin(P_1',P_2,\dots,P_n)$. 
\end{proof}

\begin{definition}
    A process  $[a_1,b_1,a_2,b_2,\ldots,a_n]$  (or $[b_0,a_1,b_1,a_2,\ldots,a_n]$, or $[a_1,b_1,$ $a_2,b_2,$ $\ldots,a_n,b_n]$, 
    or $[b_0,a_1,b_1,a_2,b_2,\ldots,a_n,b_n]$, resp.) 
    is a 
  monotonic increasing zig-zag (mizz), iff
   $a_i < b_j$ for all $i,j$, and 
   $a_1,a_2,\ldots,a_n$ is strictly monotonic decreasing, and  
    $b_1,b_2,\ldots,b_{n-1}$ (and $b_0,b_1,b_2,\ldots,b_{n-1}$ and $b_0,b_1,b_2,$ $\ldots,b_{n-1},b_n$, resp.) is strictly monotonic increasing.  
    
 A process  $[a_1,b_1,\ldots,a_n]$ 
   is a 
  monotonic-decreasing zig-zag (mdzz), iff
   $a_i < b_j$ holds for all $i,j$, and 
   $a_1,a_2,\ldots a_n$ is strictly monotonic increasing, and  
    $b_1,b_2,\ldots, b_{n-1}$ (or  $b_0,b_1,b_2,\ldots, b_{n-1}$, resp. ) is strictly monotonic decreasing.  

    A process is {\em midzz}, if it is a mizz followed by a mdzz. More rigorously, there are essentially two cases, 
       where we omit the cases with end-peaks and/or start-peaks.
  \begin{enumerate}
    \item the mizz  $[a_1,b_1,a_2,b_2,\ldots,a_n]$ and the mdzz $[a_1',b_1',\ldots,a_n']$, where $a_n = a_1'$ are combined to
     $[a_1,b_1,a_2,$ $b_2,\ldots,a_n,b_1',\ldots,a_n']$,
  \item the mizz  $[a_1,b_1,a_2,b_2,\ldots,a_n,b_n]$ and the mdzz $[b_0',a_1',b_1',\ldots,a_n']$, where $b_n = b_0'$ are combined to
     $[a_1,b_1,a_2,b_2,\ldots,a_n,b_n,a_1',b_1',\ldots,a_n']$. 
  \end{enumerate} 
\end{definition}
\noindent Typical graphical representations of  mizz- and mdzz-sequences are:\\
 \begin{tabular}{l@{\qquad\qquad}l}
\begin{minipage}{4cm}  
$  \xymatrix@C=2mm@R=0.5mm{
   &&&&&b_3\ar@{-}[ddddddr] \\
  &&&b_2\ar@{-}[ddddr] \\
   & b_1 \ar@{-}[ddr] \\
  a_1 \ar@{-}[ur] \\
  &&a_2\ar@{-}[uuur] \\
  &&&&a_3\ar@{-}[uuuuur] \\
  &&&&&&a_4\\
 }
$
\end{minipage}

&
 \begin{minipage}{4cm} 
 $  \xymatrix@C=2mm@R=0.5mm{
  &b_1 \ar@{-}[dddddr] \\
   &&& b_2 \ar@{-}[ddddl] \ar@{-}[dddr] \\
   && & && b_3 \ar@{-}[ddl] \ar@{-}[dr] \\
     &  && & &&  a_4\\
     &&& &  a_3\\
     && a_2\\
 a_1 \ar@{-}[uuuuuur] \\
 }
$
 \end{minipage} 
\end{tabular}

If the goal is to compute the optimal space, then there are several reduction operations on processes that 
ease the computation and help us to concentrate on the hard case. 
First we show that one-element processes can be excluded, and second that processes with start- or end-peaks can be reduced by omitting 
elements. 
Then we show that through the use of the $5$ patterns $M_0, \ldots, M_4$ for reductions we can concentrate on special forms of zig-zag-processes, so-called midzz. 
 
\begin{proposition}\label{prop:reduce-1-element-process}
If $P_1 = [a_1]$ and $P_2, \ldots,P_n$ are processes then $\spmin(P_1,\ldots,P_n) = a_1+\spmin(P_2,\ldots,P_n)$.
\end{proposition}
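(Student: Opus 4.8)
The plan is to prove the two inequalities $\spmin(P_1,\ldots,P_n)\le a_1+\spmin(P_2,\ldots,P_n)$ and $\spmin(P_1,\ldots,P_n)\ge a_1+\spmin(P_2,\ldots,P_n)$ separately, in the same style as the correspondence-of-\spets{} arguments already used for the patterns $M_1$--$M_4$. The pivotal structural fact I would establish first is that, since all processes share a common start and end time and $P_1=[a_1]$ carries only the single value $a_1$, the first coordinate of \emph{every} tuple of \emph{any} \spet{} of $P_1,\ldots,P_n$ equals $a_1$; in other words $P_1$ contributes the constant overhead $a_1$ to every tuple and is held at $a_1$ until the common end. Thus every \spet{} $S$ of $P_1,\ldots,P_n$ has the shape $S=[(a_1,u_1),\ldots,(a_1,u_h)]$ for a sequence of $(n-1)$-tuples $u_j$, where I write $\sum u_j$ for the sum of the entries of $u_j$, so that $\spm(S)=\max_j(a_1+\sum u_j)=a_1+\max_j\sum u_j$.

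For the upper bound I would take a space-optimal \spet{} $S^\ast=[u_1,\ldots,u_h]$ of $P_2,\ldots,P_n$ with $\spm(S^\ast)=\spmin(P_2,\ldots,P_n)$ and lift it to $S=[(a_1,u_1),\ldots,(a_1,u_h)]$, i.e.\ inherit the schedule of $P_2,\ldots,P_n$ verbatim and simply never advance $P_1$. One checks this is a legal \spet{} of $P_1,\ldots,P_n$, and then $\spm(S)=a_1+\max_j\sum u_j=a_1+\spm(S^\ast)=a_1+\spmin(P_2,\ldots,P_n)$, which gives the $\le$ inequality.

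For the lower bound I would start from an arbitrary \spet{} $S=[(a_1,u_1),\ldots,(a_1,u_h)]$ of $P_1,\ldots,P_n$ and project away the first coordinate to obtain $[u_1,\ldots,u_h]$. Because $P_1$ is held and hence never singly advanced, every step of $S$ already advances at least one of $P_2,\ldots,P_n$, so $[u_1,\ldots,u_h]$ is itself a legal \spet{} of $P_2,\ldots,P_n$; consequently $\max_j\sum u_j\ge\spmin(P_2,\ldots,P_n)$. Since $\spm(S)=a_1+\max_j\sum u_j$, this yields $\spm(S)\ge a_1+\spmin(P_2,\ldots,P_n)$, and taking the minimum over all such $S$ gives the $\ge$ inequality. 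Combining the two bounds establishes the claimed equality.

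I expect the main obstacle to be the structural fact itself: one must justify rigorously that $P_1$ occupies space $a_1$ during the \emph{whole} \spet{} rather than being allowed to terminate early. This matters because a one-element process that could finish before the peak of $P_2,\ldots,P_n$ would let its overhead avoid that peak and break the identity (indeed $\spmin([10],[1,10,1])$ would drop to $11$ instead of $20$). This is exactly the point where the common-start-and-end-time assumption of the abstract model must be invoked; once that is pinned down, the lifting/projection correspondence between \spets{} of $P_1,\ldots,P_n$ and of $P_2,\ldots,P_n$ is routine and the $\max$/$\min$ bookkeeping is immediate.
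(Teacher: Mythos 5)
Your proof is correct and rests on exactly the same observation as the paper's own (one-line) proof, namely that $a_1$ is the first component of every tuple of every interleaving of $P_1,\ldots,P_n$, so that $\spm(S)=a_1+\max_j\sum u_j$. You merely spell out the lifting/projection correspondence that the paper leaves implicit, so this is the same approach in expanded form.
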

\begin{proof} $a_1$ is the first element of every tuple in any {\spet} of $P_1,\ldots,P_n$, hence  the claim is valid.
\end{proof}

\begin{proposition}\label{prop:start-and-end-peak}
Let $P_i = [p_{i,1},\dots,p_{i,n_i}]$ for $i = 1,\ldots,n$ be processes. If $p_{1,1}$ is a start-peak of $P_1$, then let $P_1' = [p_{1,2},\dots,p_{1,n_1}]$.
 Then  
$\spmin(P_1,\dots,P_n)=\max(\sum_{i} p_{i,1},\spmin(P_1',P_2,\dots,P_n))$. The same holds symmetrically if $P_1$ ends with a local peak.
\end{proposition}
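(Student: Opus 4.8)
The plan is to establish the claimed equality for the start-peak case by proving the two matching inequalities; the end-peak case will then follow by symmetry. Throughout I rely on two structural facts that are immediate from the interleaving-construction in Definition~\ref{def:integer-process}: the first tuple of every \spet{} of $(P_1,\dots,P_n)$ is $(p_{1,1},\dots,p_{n,1})$, and along any \spet{} the first coordinate runs through the values $p_{1,1},p_{1,2},\dots$ in order, each value occupying a nonempty block of consecutive tuples, with a new value appearing exactly at a step where $P_1$ is advanced.

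For the upper bound $\spmin(P_1,\dots,P_n)\leq\max(\sum_i p_{i,1},\spmin(P_1',P_2,\dots,P_n))$ I would take a space-optimal \spet{} $S'$ of $(P_1',P_2,\dots,P_n)$. Its first tuple is exactly $(p_{1,2},p_{2,1},\dots,p_{n,1})$, so prepending the tuple $(p_{1,1},p_{2,1},\dots,p_{n,1})$ and advancing only $P_1$ at the first step yields $S=(p_{1,1},p_{2,1},\dots,p_{n,1})\cons S'$, a legal \spet{} of $(P_1,\dots,P_n)$. Its space usage is $\max(\sum_i p_{i,1},\spm(S'))=\max(\sum_i p_{i,1},\spmin(P_1',\dots))$, giving the inequality. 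Note this direction does \emph{not} use the start-peak hypothesis.

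For the lower bound I would prove two estimates separately. First, $\spmin(P_1,\dots,P_n)\geq\sum_i p_{i,1}$, since every \spet{} contains the first tuple, whose entries sum to $\sum_i p_{i,1}$. Second, $\spmin(P_1,\dots,P_n)\geq\spmin(P_1',P_2,\dots,P_n)$: given any \spet{} $S$ of $(P_1,\dots,P_n)$, let tuples $1,\dots,s$ be the initial block on which the first coordinate equals $p_{1,1}$, and form $S'$ by replacing that value by $p_{1,2}$ on this whole block. Here the start-peak condition $p_{1,1}\geq p_{1,2}$ is exactly what guarantees that no tuple-sum increases, so $\spm(S')\leq\spm(S)$; the remaining tuples already follow $P_1'=[p_{1,2},p_{1,3},\dots]$ in their first coordinate and are untouched, and the coordinates for $P_2,\dots,P_n$ are never changed. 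Minimizing over $S$ then yields the estimate, and combining the two estimates with the upper bound gives equality.

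The main obstacle is verifying that $S'$ is a genuine \spet{}. After the replacement the first coordinate no longer changes at the boundary step $s\to s+1$, where $P_1$ advanced in $S$. If some other process also advanced there, $S'$ remains legal; if only $P_1$ advanced there, then after replacement tuples $s$ and $s+1$ coincide and I simply delete the duplicate, which preserves legality and leaves $\spm$ unchanged. All other transitions of $S'$ still advance at least one process, inheriting this from $S$ since away from the boundary the first process behaves identically. Finally, the end-peak statement follows by applying the start-peak case to the reversed processes, using that reading the tuple-list of a \spet{} backwards produces a \spet{} of the reversed processes with the same space usage, so that $\spmin$ is invariant under reversal.
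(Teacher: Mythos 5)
Your proof is correct and follows essentially the same route as the paper's: the upper bound by prepending the all-first-elements tuple to an optimal interleaving of $(P_1',P_2,\dots,P_n)$, and the lower bound by replacing $p_{1,1}$ with $p_{1,2}$ throughout the initial block of an optimal interleaving, using the start-peak condition to ensure no tuple-sum grows. You are in fact slightly more careful than the paper in checking that the modified sequence is a legal interleaving (merging the possible duplicate tuple at the block boundary) and in spelling out the reversal argument for the end-peak case, which the paper dispatches with ``symmetrically''.
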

\begin{proof}
   Let $q=[(p_{1,1},q_{1,2},\dots,q_{1,n}),\ldots,(p_{1,1},q_{h,2},\dots,q_{h,n})] \append$  $[(p_{1,2},q_{h+1,2},\dots,q_{h+1,n})] \append R$  be an \spet{} for $P_1,\dots,P_n$ and 
   some $h$. 
   If $h \not= 1$,  this can be changed to 
   $[(p_{1,1},q_{1,2},\dots,q_{1,n}),\!(p_{1,2},q_{2,2},\dots,q_{2,n}),$ $ \ldots,$  $ (p_{1,2},q_{h,2},\dots,q_{h,n})]$  $\append$  $[(p_{1,2},q_{h+1,2},\dots,q_{h+1,n})] \append R$ 
     without increasing the necessary space. 
   Hence $\spmin(P_1,\dots,P_n)$ $\ge$  $\max(\sum_ip_{i,1},$ \mbox{$\spmin(P_1',P_2,\dots,P_n))$}. 
   
   On the other hand, if we  have a space-optimal schedule of $P_1',P_2,\dots,P_n$, then we can extend this by starting with $(p_{1,1},\dots,p_{n,1})$ and obtain $\spmin(P_1,\dots,P_n)\le \max(\sum_ip_{i,1},\spmin(P_1',P_2,\dots,P_n))$.

   Hence $\spmin(P_1,\dots,P_n) = \max(\sum_ip_{i,1},\spmin(P_1',P_2,\dots,P_n))$.
\end{proof}   

\begin{lemma}
We can assume that processes $P_1,\dots,P_n$ are all of length at least 3 for computing the optimal space.
\end{lemma}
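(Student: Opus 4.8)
The plan is to eliminate the two remaining short cases, processes of length $1$ and of length $2$, by reducing each of them to a situation already handled by Proposition~\ref{prop:reduce-1-element-process} and Proposition~\ref{prop:start-and-end-peak}, and then to iterate this elimination until every surviving process has length at least $3$. Since both propositions keep the other processes $P_2,\dots,P_n$ fixed and express $\spmin$ of the current family through $\spmin$ of a strictly shorter family, it suffices to show that any process of length $\le 2$ can be removed while the relationship to $\spmin$ of the reduced family is preserved.

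First I would dispose of length-$1$ processes directly: if some $P_i=[a_i]$, then Proposition~\ref{prop:reduce-1-element-process} deletes $P_i$ at the cost of the additive constant $a_i$, leaving one fewer process. For a length-$2$ process $P_i=[a,b]$ I would first note that by the $M_0$-reduction (which we have already shown preserves $\spmin$) we may assume $a\neq b$, so that either $a>b$ or $a<b$. In the case $a>b$ the element $a$ is a start-peak, and Proposition~\ref{prop:start-and-end-peak} replaces $P_i$ by $[b]$; in the case $a<b$ the element $b$ is an end-peak, and the symmetric half of the same proposition replaces $P_i$ by $[a]$. Either way $P_i$ becomes a length-$1$ process, which the previous step then removes. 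The key observation making this case always applicable is that among two elements the larger one is necessarily a local peak at its (boundary) position, so one of the two alternatives of Proposition~\ref{prop:start-and-end-peak} always fires.

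Finally I would package these two reductions into an induction on the number of processes of length at most $2$ (equivalently, on the total input size $n$). Each elimination strictly decreases this measure, so the process terminates, and at termination every remaining process has length at least $3$; the value $\spmin(P_1,\dots,P_n)$ of the original family is recovered from the value of the reduced family by composing the accumulated corrections, which involve only additions of explicit first-column entries and applications of $\max$ from the two propositions. I do not expect a genuine obstacle, as both ingredients are already established; the only point needing a little care is the bookkeeping of these nested $\max$/additive corrections when several short processes are removed in sequence, together with the small argument that a two-element process always exposes a boundary peak to which Proposition~\ref{prop:start-and-end-peak} applies.
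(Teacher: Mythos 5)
Your proposal is correct and follows essentially the same route as the paper: length-$1$ processes are removed via Proposition~\ref{prop:reduce-1-element-process}, and length-$2$ processes (after $M_0$ makes the two entries distinct) necessarily expose a start- or end-peak, so Proposition~\ref{prop:start-and-end-peak} reduces them to length~$1$. Your version merely spells out the boundary-peak case split and the iteration/bookkeeping that the paper leaves implicit.
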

\begin{proof}
Proposition \ref{prop:reduce-1-element-process} permits to assume that the length is at least $2$. 
Proposition \ref{prop:start-and-end-peak}
allows to assume that there is no start- nor an end-peak. 
Hence we can assume that  processes are of length at least $3$.
\end{proof}

\begin{lemma}
 Let $P$ be a process that  starts and ends with local valleys.
 Then the application of the patterns $M_0, \ldots, M_4$
 with subsequent reduction always produces a process that also starts and ends with local valleys.
 \end{lemma}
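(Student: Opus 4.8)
The plan is to verify the claim for a single pattern application and then exploit a reversal symmetry to cut the work in half. Since ``subsequent reduction'' just iterates single pattern applications, it suffices to show that one application of any $M_j$ ($0 \le j \le 4$) preserves the property; the general case follows by induction on the number of steps. Moreover, reversing the list $[a_1,\ldots,a_k]$ to $[a_k,\ldots,a_1]$ turns a start local valley into an end local valley and vice versa, and it permutes the patterns among themselves: $M_0$ is fixed, $M_1 \leftrightarrow M_2$, and $M_3 \leftrightarrow M_4$ (one checks that the defining inequalities of $M_3$ read backwards are exactly those of $M_4$). Hence it is enough to prove that one application of any pattern preserves the start condition $a_1 \le a_2$; the end condition $a_{k-1} \ge a_k$ then follows by applying the same result to the reversed process.

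Next I would localise the effect of a reduction. Every pattern deletes only interior elements: $M_0,M_1,M_2$ delete $a_{i+1}$ and $M_3,M_4$ delete $a_{i+1},a_{i+2}$, all with index $\ge i+1 \ge 2$. Thus the first element $a_1$ is never deleted, so the head of the reduced list is still $a_1$. The head pair $(a_1,a_2)$ can change only if $a_2$ is removed, which requires a match at $i=1$; a match at any $i \ge 2$ leaves both $a_1$ and $a_2$ in place, so the start valley persists trivially. It therefore remains to examine each pattern matching at $i=1$.

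I would then run the case analysis at $i=1$. If $M_1$ matches, then $a_1 \le a_2 \le a_3$ and $a_2$ is deleted, so the new head pair is $(a_1,a_3)$ with $a_1 \le a_3$: preserved. If $M_4$ matches, then $a_1 < a_2 > a_3 < a_4$ with $a_1 \le a_3$, and $a_2,a_3$ are deleted, so the new head pair is $(a_1,a_4)$ with $a_1 \le a_3 < a_4$: preserved. The pattern $M_3$ requires $a_1 > a_2$, contradicting the start-valley hypothesis $a_1 \le a_2$, so it cannot match at $i=1$. The only delicate patterns are $M_0$ and $M_2$, both of which can match at $i=1$ only when $a_1 = a_2$; deleting $a_2$ then exposes $a_3$ as the new neighbour, and $a_1 \le a_3$ need not hold (a descending tail after a flat start would break it).

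This degenerate case $a_1 = a_2$ (and symmetrically $a_{k-1}=a_k$) is the main obstacle, and I would discharge it using the standing assumptions in force at this point of the development. By the exhaustive $M_0,M_1,M_2$-reduction the processes are \emph{strict} zig-zags, and by Proposition \ref{prop:start-and-end-peak} start- and end-peaks have been removed, so the boundary valleys are strict: $a_1 < a_2$ and $a_{k-1} > a_k$. Strictness rules out $M_0$ (which needs $a_1=a_2$) and $M_2$ (which needs $a_1 \ge a_2$) at $i=1$, closing the two problematic cases. In fact, on a strict zig-zag $M_0,M_1,M_2$ are globally inapplicable, so only $M_3,M_4$ act; since $M_3$ leaves its junction with $a_i > a_{i+3}$ and $M_4$ with $a_i < a_{i+3}$, these reductions preserve strictness at the patched position, and in particular keep the strict boundary valleys intact. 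Combining the start case with the reversal symmetry then yields the end case, completing the argument.
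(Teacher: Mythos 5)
Your proposal is correct and, at its core, follows the same route as the paper's proof, which consists of the single observation that the reductions only remove inner entries of the list (with $M_0$ treated as a special case). However, your boundary case analysis is far more careful than the paper's one\-/liner, and it surfaces something real: taken literally, the lemma fails when $M_0$ or $M_2$ matches at the boundary with a tie --- e.g.\ $[2,2,1]$ starts (and ends) with a local valley in the non-strict sense, yet $M_0$ at $i=1$ yields $[2,1]$, which starts with a peak. The paper's proof does not address this, whereas you isolate it as the only problematic case and repair it by invoking the strictness that holds at the point in the development where the lemma is actually used (start-/end-peaks already stripped by Proposition \ref{prop:start-and-end-peak}, so $a_1<a_2$ and $a_{k-1}>a_k$; the $M_0$--$M_2$ reductions exhausted, so only $M_3,M_4$ act, and these preserve strict boundary valleys). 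That patch is the right fix and is presumably what the authors tacitly assume. Two minor remarks: your blanket claim that ``every pattern deletes only interior elements'' is not quite true, since $M_0$ at $i=k-1$ deletes the last element $a_k$ --- harmless here because $a_{k-1}=a_k$, and in any case subsumed by your strictness argument, which rules $M_0$ out entirely; and under list reversal $M_0$ maps to ``delete the first of two equal neighbours'' rather than literally to $M_0$, which again produces the identical reduced list, so your symmetry argument stands.
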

 \begin{proof}
 The reduction either removes according to pattern $M_0$ or it removes inner entries of the lists.
 \end{proof}

\begin{proposition}\label{prop:midzz-cases}
 A process such that none of the  patterns $M_0,$ $M_1,$  $M_2,$ $M_3,$  $M_4$  matches and which does not start or end with a local peak is a midzz. 
 \end{proposition}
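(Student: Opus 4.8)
The plan is to first pin down the coarse shape of the process and then read off its fine structure from the failure of $M_3$ and $M_4$. Since none of $M_0,M_1,M_2$ applies, no two neighbouring entries are equal and neither two consecutive strict increases nor two consecutive strict decreases occur; hence the process is a strictly alternating zig-zag, as already noted above. Because it neither starts nor ends with a local peak it begins and ends with a local valley, so it has the form $V_0,P_1,V_1,P_2,\ldots,P_k,V_k$ with valleys $V_0,\ldots,V_k$ and peaks $P_1,\ldots,P_k$. What remains is to show that the valleys first strictly decrease and then strictly increase, that the peaks first strictly increase and then strictly decrease, and that the two turning points are compatible, so that the process decomposes as a mizz followed by a mdzz.

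First I would translate the non-matching of $M_3$ and $M_4$ into purely local constraints. Every four consecutive entries form either a down-up-down block $P_j,V_j,P_{j+1},V_{j+1}$ (the shape of $M_3$) or an up-down-up block $V_{j-1},P_j,V_j,P_{j+1}$ (the shape of $M_4$). Now $M_3$ matches a down-up-down block exactly when $P_{j+1}\leq P_j$ and $V_{j+1}\leq V_j$, so its failure yields $P_{j+1}>P_j$ or $V_{j+1}>V_j$; dually, the failure of $M_4$ on an up-down-up block yields $V_j<V_{j-1}$ or $P_{j+1}<P_j$. These two families of disjunctions, each tying a peak-step to an adjacent valley-step, are the only input the rest of the argument needs.

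From them I would prove the two unimodality claims by a ``no turning back'' inspection. For instance, if some peak-step strictly decreases, $P_{j+1}<P_j$, then the $M_3$-failure at $j$ forces the next valley-step upward, $V_{j+1}>V_j$, and the $M_4$-failure at index $j+1$ then forbids the following peak-step to be a strict increase; iterating shows that after the peaks once decrease they never strictly rise again, and a similar check rules out equal peaks immediately after a strict decrease. Hence the peaks strictly increase and then strictly decrease, with at most a single equal pair at the top; the symmetric argument gives strictly decreasing and then strictly increasing valleys. One also checks that an equal pair of peaks forces a strict valley-minimum between them (and dually), so a plateau at the top and a plateau at the bottom cannot occur simultaneously.

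Finally I would locate the split point. Let $P_m$ be a maximal peak. The $M_4$- and $M_3$-failures straddling $P_m$ force the valley-step just before $P_m$ to go down and the valley-step just after it to go up, so the valley-minimum is either $V_{m-1}$, immediately before $P_m$, or $V_m$, immediately after. In the former case I split at the shared peak $P_m$, matching the second combination case of the definition, and in the latter at the shared valley $V_m$, matching the first; the equal-pair situations are exactly the two degenerate versions of these splits. In every case the prefix has strictly decreasing valleys together with strictly increasing peaks and the suffix the reverse, while the requirement that every valley lie below every peak inside each part follows automatically from the monotonicity and the zig-zag property, since there the largest valley and the smallest peak are adjacent. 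Thus the prefix is a mizz and the suffix a mdzz, so the process is a midzz. The main obstacle is precisely this last bookkeeping: proving that the peak-turn and the valley-turn are never more than one position apart and matching the boundary variants---plateaus at top or bottom, a split at a peak versus at a valley---to the two combination cases of the midzz definition.
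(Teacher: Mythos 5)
Your proof is correct. It rests on the same core observation as the paper's proof --- the failure of $M_3$ on a window $P_j,V_j,P_{j+1},V_{j+1}$ gives $P_{j+1}>P_j$ or $V_{j+1}>V_j$, and the failure of $M_4$ on $V_{j-1},P_j,V_j,P_{j+1}$ gives $V_j<V_{j-1}$ or $P_{j+1}<P_j$ --- but it assembles the local information differently. The paper enumerates four configurations of a consecutive triple (down--up and up--down, with the third element below, above or equal to the first), pins down in each case where the fourth element may lie, and then compresses the globalization into the single sentence ``now we put the parts together''. You instead work with the peak and valley subsequences directly and prove a ``no turning back'' induction: once a peak-step fails to strictly increase, the two disjunctions alternately force every later valley-step to strictly increase and every later peak-step to strictly decrease (and symmetrically backwards), so each subsequence is unimodal with at most one flat pair, and the two turning points straddle a common maximal peak. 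This buys an explicit treatment of exactly the step the paper leaves implicit, including the boundary bookkeeping: an equal pair of peaks forces a strict valley minimum between them; the split lands on a shared valley or a shared peak according to whether the valley minimum sits after or before the maximal peak, matching the two combination cases of the midzz definition; and the separation condition $a_i<b_j$ inside each half comes for free from unimodality together with the adjacency of the largest valley and the smallest peak, as you note.
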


 \begin{proof} 
 
We consider all four different cases how small sequences may proceed, if no pattern applies.
\begin{longtable}[t]{lll}
1.
&
$\xymatrix@C=3mm@R=0.5mm{
   a_1\ar@{-}[dddr]   & &  \\
  & & a_3 \ar@{-}[ddl]   \\
    &&&    a_4?  \\
  & a_2 
 }$
&
\begin{minipage}[t]{0.7\textwidth}
Case $a_1 > a_2$, $a_2 < a_3$  and $a_3 < a_1$. Then $a_4 < a_3$. The relation  $a_4 \leq a_2$ is not possible, since then pattern 
$M_3$ matches. Hence $a_3 > a_4 > a_2$. Then $a_1,a_2,a_3,a_4$ is a tail of a mdzz. \\
The case $a_1 > a_2$, $a_2 < a_3$  and $a_3 = a_1$ leads to the same relations $a_3 > a_4 > a_2$.  
Then $a_2,a_3,a_4$ is a tail of a mdzz.
\end{minipage}

\\\\[-1ex]

2.
&
$\xymatrix@C=3mm@R=0.5mm{
& a_2 \\
 &&&    a_4?  \\
& & a_3 \ar@{-}[uul]   \\
 a_1\ar@{-}[uuur]   & &  \\
}$
&
\begin{minipage}[t]{0.7\textwidth}
Case $a_1 < a_2$, $a_2 > a_3$  and $a_3 > a_1$. Then $a_4 > a_3$. The relation  $a_4 \geq a_2$ is not possible, since then pattern $M_4$ matches. Hence $ a_3 < a_4 < a_2$. Then $a_1,a_2,a_3,a_4$ is a mdzz. \\
The case $a_1 < a_2$, $a_2 > a_3$  and $a_3 = a_1$ leads to the same relations $a_3 < a_4 < a_2$. Then using case 1 for the the next element $a_5$, the sequence $a_3,a_4,a_5$ is a tail of a mdzz.
\end{minipage}
\\\\[-1ex]

3.
&
$\xymatrix@C=3mm@R=0.5mm{
&& a_3\ar@{-}[ddl]   \\
a_1 \ar@{-}[dr] & & &  a_4 ?  \\
& a_2 
}$
&
\begin{minipage}[t]{0.7\textwidth}
Case $a_1 > a_2$, $a_2 < a_3$ and $a_3 > a_1$. Then $a_3 > a_4$ and there are three cases:\\[-0.5cm]
\begin{itemize}
\setlength{\itemsep}{0pt}
\item[(i)] If $a_4 = a_2$ then the sequence starting from $a_3$ is a mdzz.
\item[(ii)] If $a_4 > a_2$ then case 2 is applicable and  the sequence starting from $a_2$  is a mdzz.
\item[(iii)] If $a_4 < a_2$ then the  sequence starting with $a_1$ proceeds  as mizz. It may later turn into a mdzz.   
\end{itemize}
\end{minipage}

\\\\[-1ex]

4.
&
$\xymatrix@C=3mm@R=01.5mm{
& a_2 \\
a_1 \ar@{-}[ur] & & &  a_4 ?  \\
&& a_3\ar@{-}[uul]   \\
}$
&
\begin{minipage}[t]{0.7\textwidth}
Case $a_1 < a_2$, $a_2 > a_3$ and $a_3 < a_1$. Then $a_3 < a_4$ and there are three cases:\\[-0.5cm]
\begin{itemize}
\setlength{\itemsep}{0pt}
\item[(i)] If $a_4 = a_2$ then  the sequence starting from $a_3$ is a mdzz.
\item[(ii)] If $a_4 < a_2$ then case 1 is applicable and the sequence starting from $a_2$ is a mdzz.
\item[(iii)] If $a_4 > a_2$ then the  sequence starting with $a_1$ proceeds as mizz. It may later turn into a mdzz.
\end{itemize}
\end{minipage}
\end{longtable}
Now we put the parts together and conclude that the sequence must be a midzz. \qedhere
\end{proof}

 Note  that the definition of midzz permits the simplified case that the process is a mizz or mdzz.

\begin{definition}
A   process is called {\em standardized} if it is a midzz of length at least 3, and does not start nor end with a local peak.
\end{definition} 
 
 \begin{lemma}Let $P$ be a midzz-process, where no pattern $M_0,$ $M_1,$ $M_2,$ $M_3, M_4$ applies,  and which is of length at least $3$, 
 and does not start nor end with a local peak:
   Then a midzz-process has one or two global peaks, it has one or two global valleys, but not two global peaks and two global valleys at the same time.
 \end{lemma}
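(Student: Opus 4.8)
The plan is to read off the positions of the global extrema directly from the monotonicity built into the definition of a midzz. Recall that a midzz is a mizz followed by a mdzz, and that within the mizz the valleys $a_1 > a_2 > \cdots$ strictly decrease while the peaks $b_1 < b_2 < \cdots$ strictly increase, whereas within the mdzz the valleys strictly increase and the peaks strictly decrease. Since by definition every valley is strictly below every peak, the global peaks of $P$ lie among the $b$-entries and the global valleys among the $a$-entries; thus I only have to locate the maximum of the peak-sequence and the minimum of the valley-sequence. Read along $P$, the valley-sequence is V-shaped (decreasing through the mizz, then increasing through the mdzz) and the peak-sequence is $\Lambda$-shaped (increasing, then decreasing).

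Next I would split according to the two combination cases of the definition of midzz. In the first case the mizz and the mdzz are glued at a shared valley $a_n = a_1'$: here the valley-sequence strictly decreases down to this element and strictly increases afterwards, so the shared valley is the \emph{unique} global valley, while the two peaks flanking it, the last mizz-peak $b_{n-1}$ and the first mdzz-peak $b_1'$, are the only candidates for the maximum, giving one or two global peaks according as $b_{n-1} \neq b_1'$ or $b_{n-1} = b_1'$. The second case, where the two parts are glued at a shared peak $b_n = b_0'$, is completely symmetric: the shared peak is the unique global peak, and the flanking valleys $a_n$ and $a_1'$ yield one or two global valleys. The degenerate possibilities, where $P$ is a pure mizz or a pure mdzz (and, by the hypothesis on local peaks, starts and ends with a valley), are handled the same way: strict monotonicity of each run forces exactly one global peak and one global valley.

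Finally I would assemble the conclusion. In every case $P$ has one or two global peaks and one or two global valleys. Moreover two global peaks occur only in the shared-valley case, which has a unique global valley, and two global valleys occur only in the shared-peak case, which has a unique global peak; the pure cases have exactly one of each. Hence $P$ never has two global peaks and two global valleys simultaneously, as claimed. I expect the main obstacle to be bookkeeping rather than mathematics: one must check that the extremal element within each monotone run is unique, so that no extra global extrema hide in the interior, and that the junction element is correctly classified as a valley in the first case and a peak in the second, including the degenerate pure-mizz and pure-mdzz configurations covered by the no-start/end-peak hypothesis.
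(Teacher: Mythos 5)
Your proof is correct, but it takes a genuinely different route from the paper's. The paper argues by contradiction via the reduction patterns: it leans on the case analysis of Proposition~\ref{prop:midzz-cases}, observes that three global peaks (or valleys) would force an application of $M_3$ (or $M_4$), and shows that two global peaks together with two global valleys would produce the alternating equal-height picture to which $M_3$ applies --- all forbidden by hypothesis. You instead read the conclusion straight off the definition of a midzz: the valley-subsequence is strictly V-shaped and the peak-subsequence strictly $\Lambda$-shaped, so the global extrema sit at the junction and its two flanks, and the two-peaks case (shared-valley gluing) automatically has a unique valley while the two-valleys case (shared-peak gluing) automatically has a unique peak. A nice consequence of your route is that the ``no pattern applies'' hypothesis is not actually needed --- strict monotonicity of the two subsequences already does all the work --- whereas the paper's argument uses that hypothesis essentially. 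The one point worth making explicit in your write-up is the observation you state at the start, that every $a$-entry is strictly below its neighbours and every $b$-entry strictly above (this holds also at the junction element, in both gluing cases), since that is what guarantees no global extremum hides at an entry of the ``wrong'' kind; you flag this as bookkeeping, and it is, but it is the one place where the two halves of the midzz interact.
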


 \begin{proof}
 The considerations and cases in the proof of Proposition \ref{prop:midzz-cases} already exhibit the possible cases. \\
 Since the patterns $M_3, M_4$  do not apply, there cannot be three global peaks nor three global valleys.
 If there are two global peaks and two global valleys, then the picture is  
   
  $  \xymatrix@C=6mm@R=2mm{
  a_1 \ar@{-}[ddr] && a_3  \ar@{-}[ddl] \ar@{-}[ddr]  \\
  & & \\
   & a_2 & & a_4   
 }
$

and we can apply pattern $M_3$, which is forbidden by the assumptions. Similarly for the case where $a_1$ is a global valley.  
 \end{proof}          

Hence, a standardized process in midzz-form has three different possibilities for the global peaks and valleys: 
(i) there is a unique global peak and a unique global valley; (ii) 
there is a unique global peak and two  global valleys;  (iii) there are two global peaks and a unique global valley.

\section{Optimizing Many Independent Processes}\label{sec:optn}
 
 Let us assume in this section that there are $N$ processes $P_1,\ldots,P_N$ of total size $n$. 
 \begin{algorithm}[Standardization]\label{alg:standardization}
 For an input of $N$ processes $P_1,\ldots,P_N$:  
 \begin{enumerate}
   \item For every process $P_i$ in turn: Scan $P_i$ by iterating $j$ from $0$ as follows:\\
     If the patterns $M_0,\ldots,M_4$ apply at index $j$ then reduce accordingly and restart the scan at position $j-3$, otherwise go on with index $j+1$.

    \item Let $K_0$ be the sum of all first elements of $P_1,\ldots,P_N$.
          Let $P'_1 ,\ldots,P'_N$ be obtained from $P_1,\ldots,P_N$ by removing all start-peaks only from processes of length at least 2.
     \item Let  $K_\omega$ be the sum of all last elements of $P_1',\ldots,P_N'$.
         Let $P''_1 ,\ldots,P''_N$ be obtained from $P'_1,\ldots,P'_N$ by removing all  end-peaks only from processes of length at least 2.
    \item  Let $A$ be the sum of all elements of one-element processes, and 
        let $P'''_1 ,\ldots,P'''_{N'}$ be   $P''_1 ,\ldots,P''_N$ after removing all one-element processes.
    \item If $M'''$  is $\spmin(P'''_1 ,\ldots,P'''_{N'})$, then 
           $\spmin(P_1,\ldots,P_N)$ is computed as $\max(M'''+A,K_0,K_\omega)$.
 \end{enumerate}
 \end{algorithm}
  
\begin{theorem}\label{thm:standardized-spmin-complexity} Algorithm \ref{alg:standardization} for standardization reduces the 
computation of $\spmin$ for $N$ processes $P_1,\ldots,P_N$ of size $n$ 
to the computation of $\spmin$ for standardized processes in time $O(n)$.
\end{theorem}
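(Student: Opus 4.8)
The plan is to split the claim into two independent parts: correctness of the reduction (that the value $\max(M'''+A,K_0,K_\omega)$ returned in Step 5 really equals $\spmin(P_1,\ldots,P_N)$, and that the surviving processes are standardized) and the running-time bound $O(n)$. Correctness I would obtain almost entirely by chaining together the invariance results already established, so the substantive work will concentrate in the complexity analysis of Step 1.

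For correctness I would follow the $\spmin$-value through the five steps. Writing $\widehat P_1,\ldots,\widehat P_N$ for the family after Step 1, the propositions on $M_0$, on $M_1,M_2$, and the lemma on $M_3,M_4$ give $\spmin(\widehat P_1,\ldots,\widehat P_N)=\spmin(P_1,\ldots,P_N)$ and leave every process with no applicable pattern. For Step 2 I would apply the start-peak half of Proposition \ref{prop:start-and-end-peak} once per process, observing that deleting a start-peak $p_{i,1}$ can only decrease the $i$-th leading entry, so the sum of current first elements never exceeds the initial value $K_0$; the iterated maxima then collapse to $\spmin(\widehat P)=\max(K_0,\spmin(\text{after Step 2}))$. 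Step 3 is symmetric and contributes $K_\omega$, and Step 4 uses Proposition \ref{prop:reduce-1-element-process} iteratively to peel off one-element processes additively, contributing $A$. Composing these identities yields exactly $\spmin(P_1,\ldots,P_N)=\max(M'''+A,K_0,K_\omega)$. I would also verify that the survivors are standardized: by Proposition \ref{prop:midzz-cases} any pattern-free process with no start- or end-peak is a midzz, Steps 2--3 remove those peaks, and a length-$2$ survivor is impossible, since $p_1=p_2$ would trigger $M_0$ while $p_1\neq p_2$ forces a start- or end-peak; hence each remaining process is a midzz of length $\ge 3$.

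For the running time, Steps 2--5 are clearly $O(N)\subseteq O(n)$, because each computes a single sum and deletes at most one leading or trailing element per process (a zig-zag has at most one start-peak, and removing it exposes a valley), while Step 5 is one three-way maximum. The real content is Step 1, where the instruction ``restart at $j-3$'' could in principle trigger quadratic rescanning. Here I would give an amortized argument per process of length $m$: let $F$ be the number of forward advances and $R$ the number of reductions. Each reduction deletes at least one element, so $R\le m$; the scan index moves by $+1$ on an advance and by at most $-3$ on a reduction, and since it stays in $[0,m]$ we get $F\le m+3R\le 4m$, whence the total number of index inspections is $F+R=O(m)$. Summing over all processes gives $O(n)$.

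The main obstacle I anticipate lies entirely in Step 1, and is twofold. First, I must justify that backing up by exactly $3$ positions suffices to catch every pattern a reduction can newly create: since the widest patterns $M_3,M_4$ span four consecutive entries, a reduction at index $j$ can only enable a fresh match starting at index $\ge j-2$, so restarting at $j-3$ is safe with a one-position margin. Second, I must keep the amortized bookkeeping airtight despite clamping $j-3$ to $0$; but clamping only shortens backward motion, so the inequality $F\le m+3R$ is preserved. Everything else reduces to invoking the earlier propositions and the bound $N\le n$.
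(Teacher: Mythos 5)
Your proposal is correct and follows essentially the same route as the paper's proof: correctness by chaining the invariance propositions for the patterns, the start/end-peak reduction, and the one-element-process reduction, and the $O(n)$ bound via the observation that every pattern application deletes an element while the scan backs up by at most $3$, giving at most $4n$ index moves. Your version merely spells out the amortized accounting ($F \le m + 3R \le 4m$) and the justification for the back-up constant more explicitly than the paper does.
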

\begin{proof}
 Algorithm \ref{alg:standardization} is correct by Propositions  \ref{prop:reduce-1-element-process}
 and \ref{prop:start-and-end-peak}. \\
 The required number of steps for pattern application is $O(n)$: Every successful application of a pattern strictly reduces the number of elements. 
 The maximum number of steps back is $3$, hence at most $4n$ total steps are necessary. 
 Stepping back for $3$ is correct, since a change at index $k$ cannot affect pattern application for indices less than $k-3$.    
 The overall complexity is $O(n)$ since scans are sufficient to perform all the required steps and computations
 in Algorithm \ref{alg:standardization}.  
 \end{proof}

\begin{algorithm}\label{algorithm:space-n}{\bf Algorithm for Left-Scan of $N$ processes.}
We describe an algorithm for standardized processes which performs a left-scan until a global valley is reached and returns the required space 
for the left part. \\
The following index $I_{i,ends}$ in process $P_i$ for $i = 1,\ldots,N$ is fixed: 
It is the index in $P_i$ of the global valley, if it is unique, and of the rightmost
global valley if there are two global valleys.

\begin{enumerate}
\item Build up a search tree $T$ that contains pairs $((p_{i,2}-p_{i,1}),i)$ for each process $P_i=[p_{i,1},\dots,p_{i,n_i}]$, where the first component is the search key. % O(k*log k): O(k) for the calculation of the unsorted list and O(k*log k) to sort this list.
\item Set $S=M=\sum_i p_{i,1}$. Also for each process $P_i$ there are indices $I_i$ indicating the current valley positions of the process, initially set $I_i=1$ for each process. % O(k)
\item\label{alg-n-step3} If $T$ is empty then return $M$ and terminate. % O(1)
\item\label{alg-n-step4} Remove the minimal element $V=(d,i)$ from $T$. % O(log k)

If $I_i+2 \leq I_{i,ends}$, then set $M=\max(M,S+d)$, $S=S+(p_{i,3}-p_{i,1})$, insert $(p_{i,4}-p_{i,3},i)$ into $T$ (only if $P_i$ contains at least 4 elements), set $I_i=I_i+2$ and remove the first two elements from $P_i$. 
Note that $P_i$ is not considered anymore in the future if $I_i+2>I_{i,ends}$ or if there is no further peak in $P_i$ after $I_i$. % O(log k)

Goto \eqref{alg-n-step3}.
\end{enumerate} 

\end{algorithm}

\noindent The right-to-left algorithm is the symmetric version and yields also the required space for the right part. 

 \begin{algorithm}\label{algorithm:SPOPTn} {\SPOPTn}\  {\bf Computation of $\spmin$ for $N$ processes}\label{def:algo-n-processes} 
\begin{enumerate}
  \item Let $M_{start}$ be the sum of all start elements, and $M_{end}$ be the sum of all end elements of the given processes $P_1,\ldots,P_N$.\\ Also let $M_{one}$ be the sum of all elements of one-element-processes.
  \item Transform the set of processes into standard form.
  \item Compute $\mathit{M}_{left}$ using the left-to-right scan and $\mathit{M}_{right}$ using the right-to-left scan.
  \item Return the maximum of ($M_{left}+M_{one})$, $(M_{right}+M_{one})$, $M_{start}$ and $M_{end}$.
\end{enumerate}
  \end{algorithm}

\begin{theorem}\label{theorem:n-processes-optimal} 
Algorithm \ref{def:algo-n-processes}, {\SPOPTn},  computes $\spmin$ of $N$ processes.
\end{theorem}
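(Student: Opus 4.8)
The plan is to reduce the theorem to the correctness of the two symmetric scans of Algorithm~\ref{algorithm:space-n} on standardized processes, and to handle the remaining ingredients by the results already established. First I would discharge the bookkeeping. By Theorem~\ref{thm:standardized-spmin-complexity} the standardization phase reduces the problem to standardized midzz-processes, while the auxiliary quantities account for the degenerate parts: one-element processes contribute additively by Proposition~\ref{prop:reduce-1-element-process}, and start- and end-peaks contribute through a maximum by Proposition~\ref{prop:start-and-end-peak}. Consequently the value $\max(M_{left}+M_{one}, M_{right}+M_{one}, M_{start}, M_{end})$ returned by {\SPOPTn} equals $\max\bigl(\spmin(\text{standardized})+M_{one},\,M_{start},\,M_{end}\bigr)$ provided $\spmin$ of the standardized processes equals $\max(M_{left},M_{right})$. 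So everything reduces to proving the latter equality, i.e.\ that the left-scan computes the optimum of the left halves and the right-scan the optimum of the right halves.

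Second I would prove a decomposition at the global valleys. Each standardized process is cut at its rightmost global valley $I_{i,\mathit{ends}}$ into a left part and a right part, both retaining that valley; by the classification of midzz-processes this valley is a global minimum. The claim is $\spmin=\max(\spmin_{\mathit{left}},\spmin_{\mathit{right}})$. For ``$\le$'' I would concatenate a space-optimal \spet{} of the left parts, which ends in the state where every process sits at its valley, with a space-optimal \spet{} of the right parts, which starts in that same state; the result is a valid \spet{} of the full processes with $\spm=\max(\spmin_{\mathit{left}},\spmin_{\mathit{right}})$. For ``$\ge$'' I would take an optimal full \spet{} and define a left \spet{} by freezing each process at its valley value once its index passes $I_{i,\mathit{ends}}$. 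Since that value is the global minimum it never exceeds the actual value, so every tuple sum can only decrease; this gives a left \spet{} of peak at most $\spmin$, hence $\spmin_{\mathit{left}}\le\spmin$, and symmetrically $\spmin_{\mathit{right}}\le\spmin$.

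Third, and this is the technical heart, I would show that the left-scan computes $\spmin_{\mathit{left}}$ (the right-scan being symmetric). Here the midzz-structure is decisive: on the left part the valleys strictly decrease and the peaks strictly increase towards $I_{i,\mathit{ends}}$, so for each process the rises $p_{i,2j}-p_{i,2j-1}$ from a valley to the next peak are strictly increasing in $j$. I would model a schedule as an ordering of zigzag-moves (valley $\to$ peak $\to$ next valley), respecting the within-process order; each move momentarily raises the running sum $S$ by its rise and then permanently lowers $S$ because the following valley is smaller. The left-scan is exactly the greedy that always performs the globally smallest available rise, which, because rises increase within a process, is always the front move of some process. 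Correctness of the greedy then follows from an exchange argument: swapping an adjacent pair of moves so that the smaller rise is performed first cannot increase $\max(S+\text{rise})$, since the dominating term is the larger rise evaluated at the higher baseline, and the swap respects all precedence constraints because the two moves belong to different processes.

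I expect the main obstacle to be the lower bound $\spmin_{\mathit{left}}\ge M_{left}$ inside this third step. The upper bound is immediate, as the greedy exhibits a concrete \spet{} of value $M_{left}$, but excluding every cleverer interleaving requires first reducing arbitrary \spets{} to such move-orderings, i.e.\ a serialization lemma stating that it never helps to have two processes simultaneously above their valleys. Making this precise, together with the boundary bookkeeping that freezing or delaying a process at a valley yields a genuine \spet{} of the part-processes, and the careful treatment of case~(ii) where the left part still carries the central peak, is where I expect the real work to lie.
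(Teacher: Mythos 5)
Your proposal is correct and follows the same overall architecture as the paper's proof: reduce to standardized midzz-processes and handle one-element processes and start-/end-peaks by the earlier propositions, treat the two halves cut at the rightmost global valleys separately, and recombine by gluing the left schedule to the reversed right schedule. Your explicit identity $\spmin=\max(\spmin_{\mathit{left}},\spmin_{\mathit{right}})$ is a cleaner packaging of what the paper does implicitly (it shows $M_{\mathit{left}}$ and $M_{\mathit{right}}$ are each lower bounds for the full problem and that the glued schedule attains their maximum). The one genuine difference is the mechanism for the key step, optimality of the scan on one half: the paper argues directly against an arbitrary optimal \spet{} --- normalize it so that every process sits at a local valley, observe that to proceed it must pay at least the current valley-sum plus the smallest available rise, and iterate over successive valley-states --- whereas you serialize schedules into valley-to-valley moves and run an adjacent-exchange argument, using that the rises are strictly increasing within each half so the greedy always takes a front move. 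Both routes hinge on the same normalization (it never helps to hold two processes above their valleys simultaneously), which the paper dispatches by flattening a process at a local peak down to the following valley and which you correctly flag as the remaining technical work. Once that lemma is in place your exchange step is the more self-contained of the two, since for $r_A\le r_B$ the inequality $\max(S+r_A,\,S-\delta_A+r_B)\le S+r_B\le\max(S+r_B,\,S-\delta_B+r_A)$ does all the work explicitly, while the paper leaves the induction over valley-states informal; what the paper's version buys is that it bounds every optimal \spet{} directly without first proving that serialized schedules suffice.
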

\begin{proof}
Let $P_1,\ldots,P_N$ be $N$ processes. To achieve the standard forms Algorithm \ref{alg:standardization}  is applied. 
First we argue that for those processes the required space is at least the computed space by the left-to-right scan.

Consider a state $(i_1,\ldots,i_n)$ during the construction of a space-optimal \spet{} using space $M$, where every $i_j$ is not after the index of the smallest valley,
which means  $i_j \leq I_{j,ends}$.
An invariant of the state is that $p_{i_1} + \ldots + p_{i_n} \leq M$. 
We also assume as an invariant that the current state belongs to an optimal \spet{}.
If some $i_j$ is the position of a local peak, then the optimal \spet{} can be changed to $i_j+1$ such that  
the next tuple is $(i_1,\ldots,i_j+1,\ldots,i_n)$.
Repeating this argument, we can assume that $(i_1,\ldots,i_n)$ contains only indices of local valleys. 
Now consider the set $S$ of positions $j$ in the tuple, such that $i_j <  I_{j,ends}$. For at least one such index the optimal \spet{} must proceed. 
For the indices in $S$, the next index will be a local peak, so the best way is to look for the smallest peak $p_{i_j+1}$ for $j \in S$.
 If the sum of the spaces exceeds $M$ then we have a contradiction, since the \spet{} must proceed somewhere. 
Hence $M$ is at least  $\min\{p_{i_j+1} + \sum_{h\not=j} p_{i_h} \mid j=1,\ldots,n\}$. This argument also holds, if the indices $i_j$ for $j \not\in S$ are beyond $I_{j,ends}$,
since the valley at $I_{j,ends}$ is smaller. 
For a better efficiency the algorithm calculates these sums implicitly by keeping track of the sum of the current valleys,
 i.e. $\sum_{h} p_{i_h}$. Then it uses a search tree containing the space differences between the corresponding local valley and the next peak to step forward,
  i.e. to calculate $p_{i_j+1}$. 
For the right-to-left scan the same arguments hold, symmetrically where by slight asymmetry, we only scan to  the rightmost minimal valley for every process.

Thus we have two lower bounds $M_{left}$ and $M_{right}$ for the optimal \spet{}.

The only missing argument is that we can combine those two values. For processes that have a unique global minimal valley, the combination is trivial.
For the case of processes that have global minimal valleys, we glue together the left \spet{} with the reversed right \spet{}.
This is an \spet{} and it can be performed in space at most the maximum of $M_{left}$ and $M_{right}$. 
Concluding, the algorithm computes $\spmin$ for the input processes.
\end{proof}

\begin{theorem}\label{theorem:n-processes-polynomial} 
If there are $N$ processes $P_1,\ldots,P_N$ of total size $n$, then the optimal space and an optimal schedule can be computed in time 
$O(N\log N + n\log N)$.
\end{theorem}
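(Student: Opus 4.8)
The plan is to observe that correctness is already settled by Theorem \ref{theorem:n-processes-optimal}, so only the running-time bound for Algorithm \ref{def:algo-n-processes} ({\SPOPTn}) remains to be established. I would account for the cost of each of its four steps separately and argue that the two scans of Algorithm \ref{algorithm:space-n} dominate. Note that since each process is a nonempty list we have $N \le n$, so any $O(n)$ contribution is absorbed into the target bound.

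First I would dispatch the cheap steps. Computing $M_{start}$, $M_{end}$ and $M_{one}$ in step~1 only inspects the first and last element of each process and sums over the one-element processes, so it costs $O(N)$. Transforming the processes into standard form in step~2 is $O(n)$ by Theorem \ref{thm:standardized-spmin-complexity}; moreover standardization never increases the total size, so the post-standardization total size is still at most $n$. Taking the maximum of the four candidate values in step~4 is $O(1)$. Hence everything outside the scans is $O(n)$.

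Next I would analyze the left-to-right scan of Algorithm \ref{algorithm:space-n} (the right-to-left scan being symmetric). Building the initial search tree $T$ with one key $(p_{i,2}-p_{i,1},i)$ per process costs $O(N \log N)$. The key quantitative claim is that the main loop (steps \ref{alg-n-step3}--\ref{alg-n-step4}) performs only $O(n)$ iterations: each iteration removes the current minimum from $T$, and in the productive case advances some process $P_i$ by deleting its first two elements and reinserting a single successor key. Since there are at most $n$ elements in total and each productive iteration consumes two of them, there are at most $n/2$ productive iterations; each process contributes at most one further, unproductive iteration (when it is dropped because $I_i+2 > I_{i,ends}$ or no further peak remains), giving $O(N)$ of those. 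Thus the loop runs $O(n+N) = O(n)$ times.

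Finally I would bound the cost per iteration. At every moment $T$ holds at most one key per process, so $|T| \le N$, and therefore each delete-min and each insert costs $O(\log N)$, while updating $S$ and $M$ is $O(1)$. Multiplying, the loop costs $O(n\log N)$, and together with building $T$ the scan is $O(N\log N + n\log N)$. Running both scans keeps the same order, and adding the $O(n)$ from the other steps yields the overall bound $O(N\log N + n\log N)$, as claimed. The main obstacle I expect is precisely the iteration-count argument: one must justify that removing the \emph{first two} elements of $P_i$ in each productive iteration charges each of the $n$ elements only a constant number of times, and that the few iterations which merely discard a process (so that $|T|$ can only shrink) do not spoil the $O(n)$ total, both of which rest on the structural guarantees of the standardized midzz-form.
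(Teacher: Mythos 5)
Your proposal is correct and follows essentially the same route as the paper's proof: correctness is delegated to Theorem \ref{theorem:n-processes-optimal}, standardization is $O(n)$ by Theorem \ref{thm:standardized-spmin-complexity}, the search tree is built in $O(N\log N)$, and since it never holds more than $N$ keys each of the $O(n)$ tree operations costs $O(\log N)$. Your explicit accounting of productive versus discarding iterations just fills in a detail the paper leaves implicit; the only point the paper adds that you omit is the remark that the bit-size of the space values is irrelevant because only additions, subtractions and maxima are performed.
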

\begin{proof}
The algorithm {\SPOPTn} computes the optimal space and an optimal schedule (see Theorem  \ref{theorem:n-processes-optimal}). 
 We estimate the required time:
 The time to produce a standardized problem is linear, which follows from Theorem \ref{thm:standardized-spmin-complexity}.
The left-to-right and the right-to-left scan can be performed in time $O(N\log N+ n\log N)$: The search tree can be initially constructed in $O(N\log N)$. 
 Since the search tree contains at most $N$ elements during the whole calculation, we need $O(n\log N)$ steps for all lookups and insertions. 
 
 Note that the bit-size of the integers of the space-sizes is not relevant, since we only use addition, subtraction, and maximum-operations on these numbers. 
\end{proof}

\begin{example} This example illustrates the computation (without the optimization using search trees) as follows:\\
Let $P_1 = [10,1,12,5,7,1]$, $P_2 = [3,11,2,10,3]$ and $P_3 = [1,2,3,4,3,2,1]$.\\
Then we first can reduce the processes as follows: $P_3$ can be reduced by pattern $M_1,M_2$ to $P_3' = [1,4,1]$. $P_2$ is already a zig-zig process, therefore no pattern applies. 
$P_1$ starts with a local peak, hence we keep in mind $14$ as the sum of the first elements and replace $P_1$ by $P_1' = [1,12,5,7,1]$. 
The next step is to apply the pattern $M_3$, which reduce it to $P_1'' = [1,12,1]$. Thus the new problem is $P_1'' = [1,12,1]$, $P_2 = [3,11,2,10,3]$, $P_3' = [1,4,1]$.\\
A short try shows that $15$ is the optimum. However, we want to demonstrate the algorithm:\\
The left scan starts with $\itMax = 5$. The peak in $P_3'$ then enforces $\itMax = 8$ and $P_3'$ is not considered anymore, since the left scan reached the final position 
in $P_3$, i.e. the rightmost global valley. The peak in $P_2'$ then enforces $\itMax = 13$ and also $P_2$ is not considered anymore, since the final position is reached. 
Finally the peak in $P_1''$ enforces $\itMax = 15$ and the left scan terminates.\\
The right scan starts with $\itMax = 5$. Then the peak in $P_3'$ enforces $\itMax = 8$, after this the peak in $P_2'$ enforces $\itMax = 12$ and 
finally the peak in $P_1''$ enforces $\itMax = 15$.\\
Hence in summary, also taking the local peak at the beginning of $P_1$ into account, the result is $15$.
\end{example}

\section{Processes with Synchronizations}\label{sec:synchro}

We indicate how to extend our model to timing and synchronization restrictions. For example, in CHF writing into a filled MVar requires the process to wait until the MVar 
is empty. There are also race-conditions, for example if several processes try to write into an empty MVar, or several processes try to read the same MVar. 
These constraints are captured by the constraints below, where the race conditions can be modeled by disjunctions. 

\begin{definition}
There may be various forms of synchronization restrictions. We will only use the following forms of fundamental restrictions: 
\begin{enumerate}
  \item  $\mathit{simul}(P_1, P_2,i_1,i_2)$: for  processes  $P_1, P_2$ the respective actions at  indices $i_1, i_2$  must happen simultaneously.
  \item  $\mathit{starts}(P_1, P_2,i)$:   process  $P_1 $ starts at index  $i$ of process $P_2$ 
  \item   $\mathit{ends}(P_1, P_2,i)$: process  $P_1$ ends at index  $i$ of process $P_2$.
  \item$\mathit{before}(P_1, P_2, i_1, i_2)$: for  processes  $P_1, P_2$ the action at index $i_1$ of $P_1$ happens simultaneously or before 
     the action at $i_2$  of $P_2$.
\end{enumerate} 
For a set $R$ of restrictions only schedules are permitted that obey all restrictions. This set $R$ is also called a set of {\em basic} restrictions.
 
We also permit Boolean formulas over such basic restrictions.   In this case the permitted schedules must obey the complete formula.
\end{definition}

 Note that in CHF these restrictions correspond to synchronization conditions of: start of a future, waiting for an MVar to be in the right state. 
The simultaneous condition is not necessary for single reduction steps in CHF, but can be used for  blocks of monadic commands.  

We show that there is an algorithm for computing the optimal space and an optimal schedule that has an exponential complexity, where
the exponent is $b\cdot N$ where  $b$ is the size of the Boolean formula and $N$ is the number of processes.

\begin{theorem}\label{theorem:upper-complexity-with-restrictions}
 Let there be  $N$ processes and a set $B$ of Boolean restrictions where $b$ is the size of $B$ and the  size of the input is $n$. 
  Then there is an algorithm to compute the optimal space and an optimal schedule of worst case asymptotic complexity of
 $O(poly(n)\cdot n^{O(b\cdot N)})$, where $\mathit{poly}$ is a polynomial.
\end{theorem}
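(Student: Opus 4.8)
The plan is to reduce the constrained problem to a bounded number of \emph{synchronization-free} subproblems, each solvable by the algorithm {\SPOPTn}, and to enumerate the finitely many ways the synchronization events can be aligned in global time. First I would bring the formula $B$ into disjunctive normal form. Since $B$ mentions at most $b$ atomic restrictions, there are at most $2^{b}\le n^{O(b)}$ truth assignments to these atoms, and each satisfying assignment yields a \emph{conjunction} of basic restrictions and negated basic restrictions. Negations are pushed into the same vocabulary: the negation of a $\mathit{before}$-constraint is the strict reverse ordering, and the negation of a $\mathit{simul}$-, $\mathit{starts}$- or $\mathit{ends}$-constraint is a disequality of global times, i.e.\ a disjunction of two strict orderings that is absorbed into the DNF. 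Hence it suffices to give, for a single conjunction of $O(b)$ simultaneity- and ordering-constraints, an algorithm of complexity $O(\mathrm{poly}(n)\cdot n^{O(bN)})$, and then take the minimum of the resulting $\spmin$-values over all (at most $n^{O(b)}$) satisfying conjunctions.

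For a fixed conjunction, the key observation is that it mentions only $O(b)$ distinguished $(\text{process},\text{index})$ pairs, the \emph{synchronization events}. I would then guess (i) a linear preorder of these events in global time, consistent with the $\mathit{before}$- and $\mathit{simul}$-constraints, and (ii) for each of the $N$ processes and each of the $O(b)$ resulting time boundaries, the index that this process has reached at that boundary. The number of orderings in (i) is at most $(O(b))! = 2^{O(b\log b)}$, which is dominated by the final bound because $b\le n$ forces $\log b \le N\log n$; choice (ii) fixes $O(bN)$ indices, each in $\{1,\dots,n\}$, giving $n^{O(bN)}$ possibilities. For each guess one checks in $O(n)$ time that the index assignment is monotone per process and satisfies all constraints of the conjunction (a $\mathit{simul}$-event forces the two processes to their prescribed indices at a common boundary, a $\mathit{starts}/\mathit{ends}$-event pins the first/last boundary of a process, a $\mathit{before}$-event is validated against the guessed preorder).

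A consistent guess cuts the global timeline into $O(b)$ phases delimited by these boundaries. Within a phase no synchronization constraint is active, so every process either runs freely from its guessed start index to its guessed end index in that phase, or is paused at a fixed index and contributes a constant to every tuple of the phase (a one-element process, handled by Proposition \ref{prop:reduce-1-element-process}). The resulting instance is precisely a set of independent processes, so its optimal space is computed by {\SPOPTn} in time $O(N\log N + n\log N)$ by Theorem \ref{theorem:n-processes-polynomial}; the space charged to the whole guess is the maximum over its $O(b)$ phases. Taking the minimum of this value over all guesses and all DNF clauses yields $\spmin$: every valid schedule induces some consistent guess whose computed value is at most its $\spm$, and conversely the per-phase optimal \spets{} returned by {\SPOPTn} can be concatenated at the boundaries into a single valid schedule realizing the computed value. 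Multiplying the factors gives $n^{O(b)}\cdot n^{O(bN)}\cdot\mathrm{poly}(n) = O(\mathrm{poly}(n)\cdot n^{O(bN)})$.

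The hard part will be the correctness of the decomposition: one must show that fixing only the $O(bN)$ boundary indices loses no optimality, i.e.\ that an optimal constrained schedule can always be rearranged — in the spirit of the standardization lemmas that move local peaks and eliminate pattern matches — so that the free parts of distinct phases do not interact, and that the gluing of the per-phase optima never exceeds the maximum over phases. The most delicate subcase is the interplay of $\mathit{simul}$ and $\mathit{before}$ constraints when several events collapse onto the same boundary, where one must verify that a common boundary configuration consistent with all of them exists and that the independent-process reduction inside each adjacent phase remains valid.
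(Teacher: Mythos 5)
Your proposal is correct and follows essentially the same route as the paper: the paper's proof also brute-forces all $N$-tuples of indices for each of the $b$ basic restrictions (giving the $n^{b\cdot N}$ factor), checks consistency of the resulting time constraints in polynomial time, splits the timeline into at most $b+1$ intervals, and runs {\SPOPTn} on each interval. Your version merely spells out details the paper leaves implicit (the DNF treatment of the Boolean formula, the ordering of synchronization events, and the gluing argument), so it is a faithful, somewhat more careful elaboration of the same argument.
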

\begin{proof}
The algorithm is simply a brute force method of trying all possibilities:  For every condition try all tuples of indices. 
The number of different tuples is at most $n^N$ and for trying this for every basic restriction we get an upper bound of $n^{N\cdot b}$. 
Now we have to check whether the time constraints are valid, i.e. there are no cycles, which can be done in polynomial time. 
Now we can split the problem into at most $b+1$ intervals with interception of an index of a condition and apply for every interval the 
algorithm {\SPOPTn}  (see \ref{algorithm:SPOPTn}), which requires time sub-quadratic in $n$ by Theorem \ref{theorem:n-processes-polynomial}.
 Thus we get an asymptotic time complexity as claimed. 
\end{proof}

\begin{corollary}\label{corr:fixed-polynomial-with-restrictions}
 Let there be  $N$ processes and a set $B$ of Boolean restrictions where $b$ is size of $B$ and the  size of the input is $n$. 
 Assume that the number $N$ of processes and the size of $B$  is fixed. 
  Then there is a polynomial algorithm  to compute the optimal space and an optimal schedule. 
\end{corollary}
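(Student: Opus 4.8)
The plan is to obtain this corollary as an immediate specialization of Theorem~\ref{theorem:upper-complexity-with-restrictions}. That theorem already hands us an algorithm (the brute-force search over index tuples followed by splitting into at most $b+1$ intervals and running {\SPOPTn} on each) whose correctness is established there; so no new algorithm has to be designed and no new correctness argument is needed. The only thing left to verify is that its worst-case complexity bound becomes polynomial in $n$ under the additional hypothesis that $N$ and $b$ are fixed.

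First I would simply quote the complexity estimate from Theorem~\ref{theorem:upper-complexity-with-restrictions}, namely $O(\mathit{poly}(n)\cdot n^{O(b\cdot N)})$. The key observation is that when both the number of processes $N$ and the size $b$ of the Boolean formula $B$ are treated as fixed constants, the product $b\cdot N$ is itself a fixed constant, and hence the exponent $O(b\cdot N)$ collapses to $O(1)$. Consequently the factor $n^{O(b\cdot N)}$ becomes $n^{O(1)}$, which is polynomial in the input size $n$, and multiplying this by the remaining polynomial factor $\mathit{poly}(n)$ keeps the whole bound polynomial in $n$, since the product of two polynomials is again a polynomial.

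I do not expect a genuine obstacle here: the statement is a direct parametric specialization, and the only point worth stating explicitly is that ``fixed $N$ and $b$'' is exactly the condition that turns the $n^{O(b\cdot N)}$ factor into a constant-degree polynomial. The one thing I would be slightly careful about is making clear that the polynomial \emph{degree} produced this way depends on the constants $N$ and $b$ (the algorithm is polynomial for each fixed choice, not uniformly fixed-parameter polynomial), but this is precisely what the corollary claims, so no further work is required.
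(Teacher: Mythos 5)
Your proposal is correct and matches the paper's intent exactly: the corollary is stated as an immediate consequence of Theorem~\ref{theorem:upper-complexity-with-restrictions}, and the whole content is the observation that fixing $N$ and $b$ turns the exponent $O(b\cdot N)$ into a constant so that $\mathit{poly}(n)\cdot n^{O(b\cdot N)}$ becomes polynomial in $n$. Your added remark that the polynomial degree depends on the fixed constants is a fair clarification but does not change the argument.
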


In general, the optimization problem with synchronization restrictions is NP-complete: 

\begin{theorem}\label{thm:NP-hard}
In the general case of synchronization restrictions, the problem of finding the minimal space is NP-hard and hence NP-complete.
\end{theorem}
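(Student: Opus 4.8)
The plan is to prove both directions of NP-completeness for the natural decision version of the problem: given processes $P_1,\ldots,P_N$, a Boolean restriction formula $B$, and a bound $k$, decide whether there is a schedule obeying $B$ whose space usage is at most $k$. Membership in NP is the easy half and I would dispatch it first. By the construction of interleavings, every step takes the tail of at least one nonempty process, and the total number of elements is $n$; hence every interleaving has length at most $n$ and can be written down explicitly as a polynomial-size certificate. Given such a certificate one checks in polynomial time that each basic restriction occurring in $B$ holds (by reading off the relevant step indices), evaluates the Boolean combination $B$, and computes the maximal tuple sum to compare it with $k$. Thus the problem lies in NP, and together with hardness it is NP-complete.

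For the hardness direction I reduce from \textsc{3-Sat}, which is the natural source given that the restrictions already allow arbitrary Boolean combinations (the disjunctions modelling race conditions). Given $\phi = C_1 \wedge \cdots \wedge C_m$ over variables $x_1,\ldots,x_r$, I build one variable gadget per $x_v$: a small pair of processes $A_v, B_v$ containing two designated events whose relative order in a schedule is a free binary choice, read as the truth value of $x_v$ (true if the $A_v$-event is scheduled before the $B_v$-event, false otherwise). A conjunct of $B$ excludes the tie so that exactly one of the two orders is realized. All process entries are chosen tiny and equal, and $k$ is chosen large enough that the space bound is met by \emph{every} schedule; in this way space imposes no constraint and the entire difficulty is pushed into satisfiability of $B$. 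Each clause $C_j = \ell_{j,1} \vee \ell_{j,2} \vee \ell_{j,3}$ is translated into a disjunctive restriction $R_{j,1} \vee R_{j,2} \vee R_{j,3}$, where $R$ for a literal $x_v$ is the $\mathit{before}$-constraint witnessing ``$x_v$ true'' and $R$ for $\neg x_v$ the one witnessing ``$x_v$ false''; the overall $B$ is the conjunction of the tie-breaking conjuncts with the clause disjunctions. This reduction is clearly polynomial: $O(r)$ processes and a formula of size $O(m)$ with bounded integers.

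Correctness then follows in both directions. A satisfying assignment yields a schedule by orienting each gadget according to its truth value; since the gadgets are independent every combination of orders is realizable, in each clause the literal made true satisfies its disjunct so $B$ holds, and by the choice of $k$ the space bound holds as well. Conversely any schedule obeying $B$ induces a truth assignment through the gadget orders, and satisfaction of every clause disjunction forces a true literal in each clause, so $\phi$ is satisfiable. The step I expect to be the main obstacle is the design and verification of the variable gadget: one must guarantee that the two designated events genuinely admit both orders in some valid schedule, that the tie can be excluded by a single restriction, and that distinct gadgets do not interact, so that the induced assignment is well defined and unconstrained across variables. Once the gadget is shown to behave as a clean, independent Boolean switch, the clause encoding together with the triviality of the space bound makes the equivalence with satisfiability of $\phi$ immediate, and NP-hardness—hence, with the membership argument above, NP-completeness—follows.
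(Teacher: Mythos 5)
Your overall strategy is sound, but it is a genuinely different reduction from the paper's. The paper reduces from the (perfect) \textsc{Partition} problem: for a multiset $\{a_1,\ldots,a_n\}$ it takes processes $P_i=[0,a_i,0,0]$ and a clock process $P_0=[0,0,0,0]$, and imposes for each $i$ the disjunction $\mathit{simul}(P_0,P_i,2,3)\vee\mathit{simul}(P_0,P_i,3,2)$, so that every schedule splits the $a_i$ between two time slots and the minimal space equals the optimal partition value. There, feasibility is trivial and the hardness lives entirely in the space objective. Your reduction from \textsc{3-Sat} does the opposite: the space bound is made vacuous and all the hardness is pushed into the existence of a schedule obeying $B$. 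Both establish NP-hardness of the decision version (and your NP-membership argument, which the paper omits, is correct: each interleaving step consumes at least one list element, so schedules have length at most $n$ and serve as polynomial certificates). One could argue the paper's version is more informative, since it shows the \emph{optimization} is hard even when feasibility is easy, but the theorem as stated is proved either way.

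The one point you flag as the main obstacle is indeed the place where your sketch is not yet a proof: the variable gadget. With only $\mathit{before}$, the two polarities are not mutually exclusive, because $\mathit{before}(A_v,B_v,i,j)$ and $\mathit{before}(B_v,A_v,j,i)$ both hold when the two events are simultaneous; a schedule realizing the tie would then satisfy clause disjuncts for both $x_v$ and $\neg x_v$, breaking the direction ``schedule $\Rightarrow$ satisfying assignment.'' Excluding the tie needs either negation in the Boolean restriction language (the paper permits ``Boolean formulas,'' but its own reduction uses only $\wedge$ and $\vee$, so you should say explicitly that you rely on negation) or a redesign of the switch. A clean negation-free fix is exactly the paper's trick: introduce one clock process $C$ and encode $x_v$ by $\mathit{simul}(V_v,C,2,2)\vee \mathit{simul}(V_v,C,2,3)$, reading $x_v$ as true in the first case; since a single event cannot be simultaneous with two distinct events of $C$, the two cases are mutually exclusive by construction. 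With that repair, your correctness argument in both directions goes through.
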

\begin{proof}
We use the (perfect) partition problem, which is known to be NP-hard. An instance is a multi-set $A$ of positive integers 
and the question is whether
there is a partition of $A$ into two sub-multi-sets $A_1, A_2$, such that $\sum A_1 = \sum A_2$.

This can be encoded as the question for the minimal space for a scheduling: 
Let $P_i = [0,a_i,0,0]$ for $A = \{a_1,\ldots,a_n\}$ and $P_0 = [0,0,0,0]$, where the indices are $1,2,3,4$.
 The condition is a conjunction of the following disjunctions: 
$(P_0, P_i,2,3) \vee (P_0, P_i,3,2)$.  The optimal space  is reached for a schedule, where indices $1,4$ are zero and 
where at index $2$ and $3$, there is a perfect partition of $A$. 
\end{proof}

\begin{example} We illustrate how an abstract version of the producer-consumer problem can be modeled using \spets{} and synchronization restrictions. 
The idea is that the consumer process $P_1$ produces a list/stream that is consumed by the process $P_2$.  The single elements are also modeled as processes.
Our modelling will be such that the optimal space modelling coincides with the intuition that the space usage of the intermediate list is minimal if
there is an eager consumption of the produced list elements.

We represent the problem as follows. There are two processes $P_1, P_2$, the producer and the consumer, 
which consist of $n$ times the symbol $1$. 
There are also $n$ processes $Q_1, \ldots, Q_n$ that only consist of two elements: a $1$ followed by a $0$, where the processes  represent the unconsumed parts of the exchanged list.
We represent the possible executions by synchronization restrictions: 
\begin{itemize}
  \item $Q_i$ is started by $P_1$ at time point $i$:  $\mathit{starts}(Q_i,P_1,i)$
  \item  $Q_i$ is consumed by $P_2$ at a time point $i$ or later:\\This can be represented by 
      $\mathit{before}(P_2, Q_i, i, 2)$ for all $i$. 
  \item $Q_{i+1}$ ends later  than $Q_i$ for all $i$:    $\mathit{before}(Q_i,Q_{i+1}, 2,2)$   for all $i$.  
\end{itemize}

The start of the space-optimal schedule is as follows and requires 3 units of space: \\\\
\hspace*{1cm}
\begin{tikzpicture}
\node at (0.5,0) {$P_1$};
\node at (7.5,0) {$\dots$};
\node at (0.5,-1) {$P_2$};
\node at (7.5,-1) {$\dots$};
\node at (0.5,-2) {$Q_1$};
\node at (0.5,-3) {$Q_2$};
\node at (0.5,-4) {$Q_3$};
\draw (1,0) -- (7,0);
\draw (1,-1) -- (7,-1);
\draw (1,-2) -- (3,-2);
\draw (2,-3) -- (4,-3);
\draw (3,-4) -- (5,-4);
\foreach \Point in {(1,0), (2,0), (3,0), (4,0), (5,0), (6,0), (7,0), (1,-1), (2,-1), (3,-1), (4,-1), (5,-1), (6,-1), (7,-1)}{\node at \Point {$\mid$};}
\foreach \Point in {(1.5,-0.75), (2.5,-0.75), (3.5,-0.75), (4.5,-0.75), (5.5,-0.75), (6.5,-0.75)}{\node at \Point {$1$};}
\foreach \Point in {(1.5,0.25), (2.5,0.25), (3.5,0.25), (4.5,0.25), (5.5,0.25), (6.5,0.25)}{\node at \Point {$1$};}

\foreach \Point in {(1,-2), (2,-2), (3,-2), (2,-3), (3,-3), (4,-3), (3,-4), (4,-4), (5,-4)}{\node at \Point {$\mid$};}
\foreach \Point in {(1.5,-1.75), (2.5,-2.75), (3.5,-3.75)}{\node at \Point {$1$};}
\foreach \Point in {(2.5,-1.75), (3.5,-2.75), (4.5,-3.75)}{\node at \Point {$0$};}
\end{tikzpicture}
\end{example}

\section{Applications}\label{sec:applications}

\subsection{A Variant of Job Shop Scheduling}\label{sec:jobshop}

A variant of job shop scheduling is the following: 
Let there be $n$ jobs (processes) that have to be performed on a number of identical machines. 
If the focus is on the question how many machines are sufficient for processing, then 
 we can ignore the time and thus only specify the number of machines that are necessary for every single sub-job of any job (process).
 The necessary information is then the list of numbers (of machines) for every job. Note that also the number $0$ is permitted.
The trivial solution would be that all jobs run sequentially, in case the machine lists of every job are of the form 
$[0,k_2, \ldots,k_n,0]$.

If there are in addition (special) time constraints, for example every job starts immediately with a nonzero number of machines, 
and also all jobs end with a nonzero number of machines and they terminate all at the same time,
then our algorithm {\SPOPTn} can be applied in a nontrivial way and will compute the minimal total number of necessary  machines.

In the case of further time constraints, Corollary \ref{corr:fixed-polynomial-with-restrictions} shows that in certain cases there are efficient algorithms
and Theorem \ref{thm:NP-hard} shows that the problem, if there are general time constraints, is NP-complete.

  Our approach and algorithm  is related to resource constrained project scheduling \cite{resource-constrained-book:2008} insofar as we are looking and optimizing 
  the space resource of 
  several given processes (projects). The difference is that in job shop and project scheduling the primary objective is to minimize the overall required time, whereas our
  algorithm computes a minimal bound of a resource (here space) not taking the time into account.

 \subsection{An Implementation for Checking Space Improvements }\label{subsec:chfi}
 
The interpreter CHFi calculates all possible interleavings for CHF-Programs (the program can be downloaded here: 
\text{\url{www.ki.cs.uni-frankfurt.de/research/chfi}}). The interpreter also provides a contrary mode that parallelizes as much as possible. 
We implemented Algorithm~{\SPOPTn}, see  Definition \ref{algorithm:SPOPTn}. It can be used with the eager parallelization mode to calculate the required space for 
independent processes. 
The interpreter can be used to affirm the space improvement property of program transformations for examples and also to falsify conjectures of space improvements
 by comparing the required space returned by the interpreter for the same program before and after the transformation was applied. 
The development of an efficient method to compute  the optimal space consumption and runtime of processes with synchronizations is left for future work.

\section{Conclusion and Future Research}\label{sec:conclusion}
We developed an offline-algorithm {\SPOPTn} that optimizes a given set of parallel and independent processes w.r.t. space and 
computes a space-optimal schedule with runtime $O((N+n)\log N)$ where $n$ is the size of the input and $N$ the number of processes. 
The algorithm is applicable to independent processes in concurrent (lazy-evaluating) languages. 
An application is to find the minimum resources that permit a global schedule in the resource-restricted scheduling projects problem.

\subsection*{Acknowledgements}  We thank David Sabel for discussions and valuable remarks on the subject of this paper.

\bibliographystyle{eptcs}

\begin{thebibliography}{10}
\providecommand{\bibitemdeclare}[2]{}
\providecommand{\surnamestart}{}
\providecommand{\surnameend}{}
\providecommand{\urlprefix}{Available at }
\providecommand{\url}[1]{\texttt{#1}}
\providecommand{\href}[2]{\texttt{#2}}
\providecommand{\urlalt}[2]{\href{#1}{#2}}
\providecommand{\doi}[1]{doi:\urlalt{http://dx.doi.org/#1}{#1}}
\providecommand{\bibinfo}[2]{#2}

\bibitemdeclare{book}{resource-constrained-book:2008}
\bibitem{resource-constrained-book:2008}
\bibinfo{author}{Christian \surnamestart Artigues\surnameend},
  \bibinfo{author}{Sophie \surnamestart Demassey\surnameend} \&
  \bibinfo{author}{Emmanuel \surnamestart N\'ero\surnameend}
  (\bibinfo{year}{2008}): \emph{\bibinfo{title}{Resource-Constrained Project
  Scheduling}}.
\newblock \bibinfo{publisher}{ISTE and Wiley}, \bibinfo{address}{London, UK,
  and Hoboken, USA}, \doi{10.1002/9780470611227}.

\bibitemdeclare{misc}{haskell-web:2016}
\bibitem{haskell-web:2016}
\bibinfo{author}{Haskell \surnamestart Community\surnameend}
  (\bibinfo{year}{2016}): \emph{\bibinfo{title}{Haskell, an advanced, purely
  functional programming language}}.
\newblock \urlprefix\url{www.haskell.org}.

\bibitemdeclare{article}{garey-jsp:76}
\bibitem{garey-jsp:76}
\bibinfo{author}{M.~R. \surnamestart Garey\surnameend} (\bibinfo{year}{1976}):
  \emph{\bibinfo{title}{The Complexity of Flowshop and Jobshop Scheduling}}.
\newblock {\sl \bibinfo{journal}{Mathematics of Operations Research}}
  \bibinfo{volume}{1}(\bibinfo{number}{2}), pp. \bibinfo{pages}{117--129},
  \doi{10.1287/moor.1.2.117}.

\bibitemdeclare{article}{garey-johnson-two-mach:77}
\bibitem{garey-johnson-two-mach:77}
\bibinfo{author}{M.~R. \surnamestart Garey\surnameend} \&
  \bibinfo{author}{D.~S. \surnamestart Johnson\surnameend}
  (\bibinfo{year}{1977}): \emph{\bibinfo{title}{Two-Processor Scheduling with
  Start-Times and Deadlines}}.
\newblock {\sl \bibinfo{journal}{Siam J. Comput.}}
  \bibinfo{volume}{6}(\bibinfo{number}{3}), pp. \bibinfo{pages}{316--426},
  \doi{10.1137/0206029}.

\bibitemdeclare{article}{gustavsson-sands:99}
\bibitem{gustavsson-sands:99}
\bibinfo{author}{J{\"{o}}rgen \surnamestart Gustavsson\surnameend} \&
  \bibinfo{author}{David \surnamestart Sands\surnameend}
  (\bibinfo{year}{1999}): \emph{\bibinfo{title}{A Foundation for Space-Safe
  Transformations of Call-by-Need Programs}}.
\newblock {\sl \bibinfo{journal}{Electr. Notes Theor. Comput. Sci.}}
  \bibinfo{volume}{26}, pp. \bibinfo{pages}{69--86},
  \doi{10.1016/S1571-0661(05)80284-1}.

\bibitemdeclare{inproceedings}{gustavsson-sands:01}
\bibitem{gustavsson-sands:01}
\bibinfo{author}{J{\"{o}}rgen \surnamestart Gustavsson\surnameend} \&
  \bibinfo{author}{David \surnamestart Sands\surnameend}
  (\bibinfo{year}{2001}): \emph{\bibinfo{title}{Possibilities and Limitations
  of Call-by-Need Space Improvement}}.
\newblock pp. \bibinfo{pages}{265--276}, \doi{10.1145/507635.507667}.

\bibitemdeclare{inproceedings}{peyton-gordon-finne:96}
\bibitem{peyton-gordon-finne:96}
\bibinfo{author}{S.~\surnamestart {Peyton Jones}\surnameend},
  \bibinfo{author}{A.~\surnamestart Gordon\surnameend} \&
  \bibinfo{author}{S.~\surnamestart Finne\surnameend} (\bibinfo{year}{1996}):
  \emph{\bibinfo{title}{Concurrent {H}askell}}.
\newblock In \bibinfo{editor}{Guy L. Steele~Jr. \surnamestart
  H.-J.~Boehm\surnameend}, editor: {\sl \bibinfo{booktitle}{Proc. 23th ACM
  Principles of Programming Languages}}, \bibinfo{publisher}{ACM}, pp.
  \bibinfo{pages}{295--308}, \doi{10.1145/237721.237794}.

\bibitemdeclare{inproceedings}{sabel-schauss-ppdp11:2011}
\bibitem{sabel-schauss-ppdp11:2011}
\bibinfo{author}{David \surnamestart Sabel\surnameend} \&
  \bibinfo{author}{Manfred \surnamestart Schmidt{-}Schau{\ss}\surnameend}
  (\bibinfo{year}{2011}): \emph{\bibinfo{title}{A contextual semantics for
  concurrent {H}askell with futures}}.
\newblock In \bibinfo{editor}{Peter \surnamestart Schneider{-}Kamp\surnameend}
  \& \bibinfo{editor}{Michael \surnamestart Hanus\surnameend}, editors: {\sl
  \bibinfo{booktitle}{Proc. 13th {ACM} PPDP 2011}}, \bibinfo{publisher}{{ACM}},
  pp. \bibinfo{pages}{101--112}, \doi{10.1145/2003476.2003492}.

\bibitemdeclare{techreport}{schmidt-schauss-sabel:frank-44:11}
\bibitem{schmidt-schauss-sabel:frank-44:11}
\bibinfo{author}{David \surnamestart Sabel\surnameend} \&
  \bibinfo{author}{Manfred \surnamestart Schmidt-Schau{\ss}\surnameend}
  (\bibinfo{year}{2011}): \emph{\bibinfo{title}{A Contextual Semantics for
  {C}oncurrent {H}askell with {F}utures}}.
\newblock \bibinfo{type}{Frank report} \bibinfo{number}{44},
  \bibinfo{institution}{Institut f\"ur Informatik. Goethe-Universit\"at
  Frankfurt am Main}.
\newblock
  \bibinfo{note}{\url{http://www.ki.informatik.uni-frankfurt.de/papers/frank/}}.

\bibitemdeclare{inproceedings}{schmidt-schauss-dallmeyer-wpte:17}
\bibitem{schmidt-schauss-dallmeyer-wpte:17}
\bibinfo{author}{Manfred \surnamestart Schmidt-Schau{\ss}\surnameend} \&
  \bibinfo{author}{Nils \surnamestart Dallmeyer\surnameend}
  (\bibinfo{year}{2018}): \emph{\bibinfo{title}{Space Improvements and
  Equivalences in a Functional Core Language}}.
\newblock In \bibinfo{editor}{Horatiu \surnamestart Cirstea\surnameend} \&
  \bibinfo{editor}{David \surnamestart Sabel\surnameend}, editors: {\sl
  \bibinfo{booktitle}{{\rm Proceedings Fourth International Workshop on}
  Rewriting Techniques for Program Transformations and Evaluation, {\rm Oxford,
  UK, 8th September 2017}}}, {\sl \bibinfo{series}{Electronic Proceedings in
  Theoretical Computer Science}} \bibinfo{volume}{265},
  \bibinfo{publisher}{Open Publishing Association}, pp.
  \bibinfo{pages}{98--112}, \doi{10.4204/EPTCS.265.8}.

\bibitemdeclare{techreport}{schmidt-schauss-sabel-dallmeyer:frank-58:17}
\bibitem{schmidt-schauss-sabel-dallmeyer:frank-58:17}
\bibinfo{author}{Manfred \surnamestart Schmidt-Schau{\ss}\surnameend},
  \bibinfo{author}{David \surnamestart Sabel\surnameend} \&
  \bibinfo{author}{Nils \surnamestart Dallmeyer\surnameend}
  (\bibinfo{year}{2017}): \emph{\bibinfo{title}{Improvements for Concurrent
  {H}askell with Futures}}.
\newblock \bibinfo{type}{Frank report} \bibinfo{number}{58},
  \bibinfo{institution}{Institut f\"ur Informatik. Goethe-Universit\"at
  Frankfurt am Main}.
\newblock
  \bibinfo{note}{\url{http://www.ki.informatik.uni-frankfurt.de/papers/frank/}}.

\bibitemdeclare{inproceedings}{schmidt-schauss-sabel-dallmeyer-ppdp-18}
\bibitem{schmidt-schauss-sabel-dallmeyer-ppdp-18}
\bibinfo{author}{Manfred \surnamestart Schmidt-Schau\ss\surnameend},
  \bibinfo{author}{David \surnamestart Sabel\surnameend} \&
  \bibinfo{author}{Nils \surnamestart Dallmeyer\surnameend}
  (\bibinfo{year}{2018}): \emph{\bibinfo{title}{Sequential and Parallel
  Improvements in a Concurrent Functional Programming Language}}.
\newblock In \bibinfo{editor}{Peter \surnamestart Thiemann\surnameend} \&
  \bibinfo{editor}{David \surnamestart Sabel\surnameend}, editors: {\sl
  \bibinfo{booktitle}{Proceedings of the 20th International Symposium on
  Principles and Practice of Declarative Programming}}, \bibinfo{series}{PPDP
  '18}, \bibinfo{publisher}{ACM, New York, USA}, pp.
  \bibinfo{pages}{20:1--20:13}, \doi{10.1145/3236950.3236952}.

\end{thebibliography}

\end{document}